\documentclass[acmtocl]{robtrans}%
\pdfoutput=1

\usepackage{dashrule}
\usepackage{proof-dashed}
\usepackage{amssymb}
\usepackage{amsmath}
\usepackage{verbatim}
\usepackage{stmaryrd}


\newtheorem{theorem}{Theorem}
\newtheorem{lemma}{Lemma}
\newtheorem{corollary}{Corollary}

\usepackage{tikz}
\usetikzlibrary{arrows}

\newcommand{\seq}[2]{{#1} \longrightarrow {#2} \mathstrut}

\newcommand{\susp}[1]{\langle{#1}\rangle}

\newcommand{\efoc}[3]{{#1}; {#2} \vdash {#3}}
\newcommand{\rfoc}[2]{{#1} \vdash [{#2}] \mathstrut}
\newcommand{\lfoc}[3]{{#1}; [{#2}] \vdash {#3} \mathstrut}
\newcommand{\ifoc}[3]{{#1}; {#2} \vdash {#3} \mathstrut}
\newcommand{\efoct}[4]{{#1}; {#2} \vdash {#3} : {#4}}
\newcommand{\rfoct}[3]{{#1} \vdash {#2} : [{#3}] \mathstrut}
\newcommand{\lfoct}[4]{{#1}; [{#3}] \vdash {#2} : {#4} \mathstrut}
\newcommand{\ifoct}[4]{{#1}; {#2} \vdash {#3} : {#4} \mathstrut}
\newcommand{\stable}[1]{{#1}\,\mathit{stable} \mathstrut}

\newcommand{\llangle}{\llbracket}
\newcommand{\rrangle}{\rrbracket}

\DeclareMathOperator{\with}{\&}

\newcommand{\dsrt}[1]{\mathsf{thunk}\,{#1}}
\newcommand{\dslt}[2]{{#1}.{#2}}      
\newcommand{\usrt}[1]{\{{#1}\}}       
\newcommand{\uslt}[1]{\mathsf{pm}\,{#1}}       

\newcommand{\etart}[1]{\langle{#1}\rangle} 
\newcommand{\etalt}[2]{\susp{#1}.{#2}}     

\newcommand{\rft}[1]{\mathsf{ret}\,{#1}} 
\newcommand{\lft}[2]{{#1} \circ {#2}} 

\markboth{Robert J. Simmons}{Structural focalization}

\title{Structural focalization}
\author{Robert J. Simmons}

\begin{abstract}
Focusing, introduced by Jean-Marc Andreoli in the context of classical
linear logic [Andreoli 1992], 
defines a normal form for sequent calculus derivations that 
cuts down on the number of possible derivations by eagerly applying invertible
rules and grouping sequences of non-invertible rules. 
A focused sequent calculus is defined relative to some non-focused
sequent calculus; {\it focalization} is the property that
every non-focused derivation can be transformed into
a focused derivation.

In this paper, we present a focused sequent calculus for propositional
intuitionistic logic and prove the focalization property relative to a
standard presentation of propositional intuitionistic logic. Compared
to existing approaches, the proof is quite concise, depending only on
the internal soundness and completeness of the focused
logic. In turn, both of these properties can be established
(and mechanically verified) by structural induction in the style of
Pfenning's structural cut elimination without the need for any tedious
and repetitious invertibility lemmas. The proof of cut admissibility
for the focused system, which establishes internal soundness, is not
particularly novel. The proof of identity expansion, which establishes
internal completeness, is a major contribution of this work.
\end{abstract}

\category{F.4.1}{Theory of Computation}{Mathematical Logic}[Proof theory]
            
\terms{Algorithms, Design, Theory, Verification} 
            
\keywords{intuitionstic logic, polarized logic,
cut admissibility, identity expansion, normalization, focusing,
proof terms, proof search}

\begin{document}


\maketitle

\section{Introduction}

The propositions of intuitionistic propositional logic are easily recognizable
and standard: we will consider a logic with atomic propositions, 
falsehood, disjunction, truth, conjunction, and implication.
\[
\begin{array}{rcl}
P, Q & ::=
 & p \mid \bot \mid P_1 \vee P_2 \mid \top \mid P_1 \wedge P_2 
  \mid P_1 \supset P_2
\end{array}
\]
The sequent calculus presentation for intuitionistic logic is also standard;
the system in 
Figure~\ref{fig:unfoc} is precisely the propositional fragment of 
Kleene's sequent system $G_3$ as presented in \cite{pfenning00structural}.
Contexts $\Gamma$ are, as usual, considered to be
unordered multisets of propositions $P$, and the structural properties
of exchange, weakening, and contraction are admissible (each left
rule incorporates a contraction).

\begin{figure}
\fbox{$\seq{\Gamma}{P}$}
\[
\infer[\mathit{init}]
{\seq{\Gamma, p}{p} \mathstrut}{}
\qquad
\mbox{\it (no rule $\bot_R$)}
\qquad
\infer[\bot_L]
{\seq{\Gamma, \bot}{Q}}
{}
\]
\[
\infer[\vee_{R1}]
{\seq{\Gamma}{P_1 \vee P_2}}
{\seq{\Gamma}{P_1}}
\qquad
\infer[\vee_{R2}]
{\seq{\Gamma}{P_1 \vee P_2}}
{\seq{\Gamma}{P_2}}
\]
\[
\infer[\vee_L]
{\seq{\Gamma, P_1 \vee P_2}{Q}}
{\seq{\Gamma, P_1 \vee P_2, P_1}{Q}
 &
 \seq{\Gamma, P_1 \vee P_2, P_2}{Q}}
\]
\[
\infer[\top_R]
{\seq{\Gamma}{\top}}
{}
\qquad
\mbox{\it (no rule $\top_L$)}
\qquad
\infer[\wedge_R]
{\seq{\Gamma}{P_1 \wedge P_2}}
{\seq{\Gamma}{P_1}
 &
 \seq{\Gamma}{P_2}}
\]
\[
\infer[\wedge_{L1}]
{\seq{\Gamma, P_1 \wedge P_2}{Q}}
{\seq{\Gamma, P_1 \wedge P_2, P_1}{Q}}
\quad
\infer[\wedge_{L2}]
{\seq{\Gamma, P_1 \wedge P_2}{Q}}
{\seq{\Gamma, P_1 \wedge P_2, P_2}{Q}}
\]
\[
\infer[\supset_R]
{\seq{\Gamma}{P_1 \supset P_2}}
{\seq{\Gamma, P_1}{P_2}}
\qquad
\infer[\supset_L]
{\seq{\Gamma, P_1 \supset P_2}{Q}}
{\seq{\Gamma, P_1 \supset P_2}{P_1}
 &
 \seq{\Gamma, P_1 \supset P_2, P_2}{Q}}
\]
\caption{Sequent calculus for intuitionistic logic.}
\label{fig:unfoc}
\end{figure}

Sequent calculi are a nice way of presenting logics, and a logic's
sequent calculus presentation is a convenient setting in which to
establish the logic's metatheory in a way that is straightforwardly
mechanizable in proof assistants (like Twelf or Agda) that are
organized around the idea of structural induction.  There are two key
metatheoretic properties that we are interested in.  The first, cut
admissibility, justifies the use of lemmas: if we know $P$ (if we have
a derivation of the sequent $\seq{\Gamma}{P}$) and we know that $Q$
follows from assuming $P$ (if we have a derivation of the sequent
$\seq{\Gamma, P}{Q}$), then we can come to know $Q$ without the
additional assumption of $P$ (we can obtain a derivation of the
sequent $\seq{\Gamma}{Q}$). \footnote{In common practice, the words
 {\it proof} and {\it derivation} are used interchangeably.  In this
 article, we will be careful to refer to the formal objects
 constructed using sequent calculus rules (such as those in
 Figure~\ref{fig:unfoc}) as {\it derivations}.  Except when
 discussing natural deduction, the words {\it proof} and {\it
   theorem} will refer to theorems proved about these formal objects;
 these are frequently called {\it metatheorems} in the literature.} 
A
proof of the cut admissibility property establishes the {\it internal
  soundness} of a logic -- it implies that there are no closed 
derivations
of contradiction, even by circuitous reasoning using lemmas. The
identity property asserts that assuming $P$ is always sufficient to
conclude $P$, that is, that the sequent $\seq{\Gamma, P}{P}$ is always
derivable. A proof of the identity property establishes the {\it
  internal completeness} of a logic. We call these properties
\emph{internal}, following Pfenning \shortcite{pfenning10categorical},
to emphasize that these are properties of the deductive system itself
and not a comment on the system's relationship to any external
semantics.

There is a tradition in logic, dating back to Gentzen
\shortcite{gentzen35untersuchungen}, that views the sequent calculus
as a convenient formalism for proving a logic's metatheoretic
properties while viewing natural deduction proofs as the ``true proof
objects.''\footnote{This discussion assumes a basic familiarity with
  natural deduction. We refer the interested reader to
  Girard, Taylor, and Lafont's {\it Proofs and Types}
  \cite{girard89proofs}; the aforementioned quote comes from Section
  5.4 of that work.}  One reason for this bias towards natural
deduction is that natural deduction proofs have nice normalization
properties.  A natural deduction proof is {\it normal} if if there are
no instances of an introduction rule immediately followed by an
elimination rule of the same connective; such detours give rise to
{\it local reductions} which eliminate the detour, such as this one:
\[
\infer[\wedge_{E1}]
{P_1\,\mathit{true}}
{\infer[\wedge_I]
 {P_1 \wedge P_2\,\mathit{true}}
 {\deduce{P_1\,\mathit{true}}{\mathcal D_1}
  &
  \deduce{P_2\,\mathit{true}}{\mathcal D_2}}}
\qquad
\Longrightarrow_R
\qquad 
\deduce{P_1\,\mathit{true}}{\mathcal D_1}
\]
The {\it normalization} property says that every 
natural deduction proof can be transformed
into a normal natural deduction proof. 

We are frequently interested in the set of normal natural deduction
proofs of a given proposition.  As an example, there is exactly one
normal natural deduction proof for $(p \wedge q) \supset (r \wedge s)
\supset (p \wedge r)$. Presented as a derivation, that natural
deduction proof looks like this:
\[
\infer[\supset^u_I]
{(p \wedge q) \supset (r \wedge s) \supset (p \wedge r)\,\mathit{true}\mathstrut}
{\infer[\supset^v_I]
 {(r \wedge s) \supset (p \wedge r)\,\mathit{true}\mathstrut}
 {\infer[\wedge_I]
  {p \wedge r\,\mathit{true}\mathstrut}
  {\infer[\wedge_{E1}]
   {p\,\mathit{true}\mathstrut}
   {\infer[\it hyp_u]{p \wedge q\,\mathit{true}\mathstrut}{}}
   &
   \infer[\wedge_{E1}]
   {r\,\mathit{true}\mathstrut}
   {\infer[\it hyp_v]{r \wedge s\,\mathit{true}\mathstrut}{}}}}}
\]
Under the standard proof term assignment for natural deduction,
this (normal) natural deduction proof corresponds to the (irreducible) 
proof term $\lambda x. \lambda y. \langle \pi_1 x , \pi_1 y \rangle$.
 
In contrast, there are many sequent calculus derivations of the same
proposition.  Here's one of them:
\[
\infer[\supset_R]
{\seq{\cdot}{(p \wedge q) \supset (r \wedge s) \supset (p \wedge r)}}
{\infer[\supset_R]
 {\seq{p \wedge q}{(r \wedge s) \supset (p \wedge r)}}
 {\infer[\wedge_R]
  {\seq{p \wedge q, r \wedge s}{p \wedge r}}
  {\infer[\wedge_{L1}]
   {\seq{p \wedge q, r \wedge s}{p}}
   {\infer[\it init]
    {\seq{p \wedge q, r \wedge s, p}{p}}
    {}}
   &
   \infer[\wedge_{L1}]
   {\seq{p \wedge q, r \wedge s}{r}}
   {\infer[\it init]
    {\seq{p \wedge q, r \wedge s, r}{r}}
    {}}}}}
\]
Reading from bottom to top, this derivation decomposes 
$(p \wedge q) \supset (r \wedge s) \supset (p \wedge r)$ on the right, then
decomposes $(r \wedge s) \supset (p \wedge r)$ on the right, then decomposes
$p \wedge r$ on the right, and then (in one branch) decomposes $p \wedge q$
on the left while (in the other branch) decomposing $r \wedge s$ on the left.
Other possibilities include decomposing $p \wedge q$ on the left 
before decomposing $(r \wedge s) \supset (p \wedge r)$ on the right
and decomposing $r \wedge s$ on the left before $p \wedge r$ on the right;
there are at least six 
different derivations even if you don't count derivations
that do useless decompositions on the left. 

These different derivations are particularly problematic if our goal
is to do proof search for sequent calculus derivations, as inessential
differences between derivations correspond to unnecessary choice
points that a proof search procedure will need to backtrack over.  It
was in this context that Andreoli originally introduced the idea of
focusing. Some connectives, such as implication $A \supset B$, are
called {\it asynchronous} because their right rules can always be
applied eagerly, without backtracking, during bottom-up proof
search. Other connectives, such as disjunction $A \vee B$, are called
{\it synchronous} because their right rules cannot be applied
eagerly. For instance, the $\vee_{R1}$ rule cannot be applied eagerly
if we are looking for a derivation of $\seq{p}{\bot \vee p}$. 
This 
asynchronous or synchronous character is the connective's {\it
  polarity}.\footnote{Andreoli dealt with a one-sided classical sequent
  calculus; in intuitionistic logic, it is common to call asynchronous
  connectives {\it right}-asynchronous and {\it
    left}-synchronous. Similarly, it is common to call synchronous
  connectives {\it right}-synchronous and {\it left}-asynchronous.

  Synchronicity or polarity, a property of connectives, is closely
  connected to (and sometimes conflated with) a property of rules
  called {\it invertibility}; a rule is invertible if the conclusion
  of the rule implies each of the premises. So $\supset_R$ is
  invertible ($\seq{\Gamma}{P_1 \supset P_2}$ implies $\seq{\Gamma,
    P_1}{P_2}$) but $\supset_L$ is not ($\seq{\Gamma, P_1 \supset
    P_2}{C}$ does not imply $\seq{\Gamma, P_1 \supset P_2}{P_1}$).
  Rules that can be applied eagerly need to be invertible, so
  asynchronous connectives have invertible right rules and synchronous
  connectives have invertible left rules. Therefore, another synonym
  for asynchronous is {\it right-invertible}, and another synonym for
  synchronous is {\it left-invertible}. This terminology would be
  misleading in our case, as both the left and right rules for
  conjunction in Figure~\ref{fig:unfoc} are invertible.}

Andreoli's key observation was that proof search only needs to
consider derivations that have two alternating phases.  In {\it
  inversion} phases, we eagerly apply (invertible) right rules to
asynchronous connectives and (invertible) left rules to synchronous
ones. When this is no longer possible, we begin a {\it focusing} phase
by putting a single remaining proposition {\it in focus}, repeatedly
decomposing it (and only it) by applying right rules to synchronous
connectives and left rules to asynchronous ones. Andreoli described
this restricted form of proof search as a regular proof search
procedure in a restricted sequent calculus; such sequent calculi, and
derivations in them, are called {\it focused} as opposed to {\it
  unfocused} \cite{andreoli92logic}.

In order to adopt such a proof search strategy, it is important to
know that the strategy is both sound (i.e., the proof search strategy
will only say ``the sequent has a derivation'' if that is the case)
and complete (i.e., the proof search strategy is capable of finding a
derivation if one exists). Soundness proofs for focusing are usually easy:
focused derivations are essentially a syntactic refinement of the
unfocused derivations.  Completeness, the nontrivial direction,
involves turning unfocused derivations into focused ones. This process
is {\it focalization}.\footnote{The usage of {\it focus}, {\it
    focusing}, {\it focussing}, and {\it focalization} is not standard
  in the literature. We use the words {\it
    focus} and {\it focusing} to describe a logic (e.g. the focused
  sequent calculus) and aspects of that logic (e.g. focused
  derivations, propositions in focus, left- or right-focused sequents,
  and focusing phases).  {\it Focalization}, derived from the French
  {\it focalisation}, is reserved exclusively for the act of producing
  a focused derivation given an unfocused derivation; the focalization
  {\it property} establishes that focalization is always possible.} Thus, an effective procedure for
focalization is a constructive witness to the completeness of
focusing.

The techniques described in this article are general and can be
straightforwardly transferred to other modal and substructural logics,
as explored in the author's dissertation
\cite{simmons12substructural}. Our approach has three key
components, which we will now discuss in turn.

\paragraph*{Focalization via cut and identity}
Existing focalization proofs almost all fall prey to the need to prove
multiple tedious invertibility lemmas describing the interaction of
each rule with every other rule; this results in proofs that are
unrealistic to write out, difficult to check, and exhausting to
contemplate mechanizing. The way forward was first suggested by
Chaudhuri \shortcite{chaudhuri06focused}. In his dissertation, he
established the focalization property for linear logic as the
consequence of the focused logic's internal soundness (the cut
admissibility property) and completeness (the identity property).
Stating and proving the identity property for a focused sequent
calculus has remained a challenge, however. A primary contribution of
this work is {\it identity expansion}, a generalization of the
identity property that is amenable to mechanized proof by structural
induction on propositions. This identity property is, in turn, part of
our larger development, a proof of the focalization property that
entirely avoids the tedious invertibility lemmas that plague existing
approaches.  (We review existing techniques used to prove the
focalization property in Section~\ref{sec:previous}.)

\paragraph*{Refining the focused calculus}
The focused logic presented in this article is essentially equivalent
to the presentation of LJF given by Liang and Miller
\shortcite{liang09focusing}, a point we will return to in
Section~\ref{sec:logic-scalc}. A reader familiar with LJF will note
three non-cosmetic differences.  The first two, our use of a {\it
  polarized} variant of intuitionistic logic and our novel treatment
of atomic and suspended propositions, will be discussed further in
Section~\ref{sec:logic}.  A third change is that LJF does not force
any particular ordering for the application of rules during a
inversion phase. A seemingly inevitable consequence of this choice is
that the proof of focalization must establish the equivalence of all
permutations of these invertible rules; this is one of the
aforementioned tedious invertibility lemmas that plague proofs of the
focalization property. Our focused logic, like many others (including
Andreoli's original system) fixes a particular inversion order.

We introduce a new calculus rather than reusing an existing one in
order to present a focused logic and focalization proof that is
computationally clean and straightforward to both mechanize and
apply to other logics.  Our desire to
mechanize proofs of the focalization property also informed our
decision to use propositional intuitionistic logic.
All proofs in this
article are mechanized in both Twelf \cite{pfenning99system} and Agda
\cite{norell08towards}, though we will only mention the Twelf
development.
%
The concrete
basis for our claim of computational cleanliness is that our
mechanizations are complete artifacts capturing the constructive
content of the proofs we present, 
and the size of this artifact scales linearly
relative to the number of connectives; the approaches we call
``tedious'' tend to scale quadratically.

\paragraph*{Proof terms}
Since Andreoli's original work, focused sequent calculus derivations
have been shown to be isomorphic to normal natural deduction proofs
for restricted fragments of logic \cite{cervesato03linear} and
variations on the usual focusing discipline \cite{howe01proof}.  Such
results challenge the position that natural deduction proofs are
somehow more fundamental than sequent calculus derivations and also
indicate that focalization is a fundamental property of logic. In
Section~\ref{sec:proofterms}, we present a proof term language for
polarized intuitionistic logic that directly captures the branching
and binding structure of focused derivations. The result is a term
language generalizing the {\it spine form} of Cervesato and Pfenning
\shortcite{cervesato03linear}.

Understanding focalization at the level of proof terms is not strictly
necessary; the theorems we prove are perfectly sensible as statements
about sequent calculi.  We choose to present cut admissibility and
identity expansion at the level of proof terms in part because it
emphasizes the constructive content of those theorems. The
constructive content of cut admissibility is a substitution function
on proof terms generalizing the {\it hereditary substitution} of
Watkins et al.~\shortcite{watkins02concurrent} in a spine form
setting, and the constructive content of our identity expansion proof
is a novel $\eta$-expansion property on proof terms.

\begin{figure}
\begin{center}
\begin{tikzpicture}
\draw (6,20.5) node{``Soundness''};
\draw (10,20.5) node{``Completeness''};

\draw[dotted] (2,20) -- (12,20);
\draw (3,18.2) node{Focused,};
\draw (3,17.8) node{polarized};
\draw[dotted] (2,16) -- (12,16);
\draw (3,15) node{$\updownarrow$};
\draw[dotted] (2,14) -- (12,14);
\draw (3,13.2) node{Unfocused,};
\draw (3,12.8) node{unpolarized};

\draw (6,19.45) node{\bf Cut admissibility};
\draw (6,18.95) node{\it Theorem~\ref{thm:cut},};
\draw (6,18.55) node{\it Section~\ref{sec:cut}}; 

\draw (10,19.45) node{\bf Identity expansion};
\draw (10,18.95) node{\it Theorem~\ref{thm:identity},};
\draw (10,18.55) node{\it Section~\ref{sec:expansion}}; 

\draw (10,17.45) node{\bf Unfocused};
\draw (10,17.05) node{\bf admissibility lemmas};
\draw (10,16.55) node{\it Section~\ref{sec:unfocusedadmissibility}}; 

\draw (6,15.45) node{\bf De-focalization};
\draw (6,14.95) node{\it Theorem~\ref{thm:soundness},};
\draw (6,14.55) node{\it Section~\ref{sec:soundness}}; 

\draw (10,15.45) node{\bf Focalization};
\draw (10,14.95) node{\it Theorem~\ref{thm:completeness},};
\draw (10,14.55) node{\it Section~\ref{sec:focalizationproof}}; 

\draw (6,13.45) node{\bf Cut admissibility};
\draw (6,12.95) node{\it Corollary~\ref{cor:cut},};
\draw (6,12.55) node{\it Section~\ref{ref:corollaries}}; 

\draw (10,13.45) node{\bf Identity};
\draw (10,12.95) node{\it Corollary~\ref{cor:identity},};
\draw (10,12.55) node{\it Section~\ref{ref:corollaries}}; 

\draw [->] (4.6,19.3) .. controls (4,17) and (3.5,14) .. (4.4,13.6);
\draw [->] (11.6,19.3) .. controls (12.7,17) and (13,14) .. (10.8,13.6);

\draw [->] (4.6,19.3) .. controls (4,17) and (6.7,17) .. (8,17);
\draw [->] (11.6,19.3) .. controls (12.65,17) and (12,17.05) .. (11.95,17.05);

\draw [->] (10,16.3) -- (10,15.7);
\draw [->] (10,14.3) -- (10,13.7);
\draw [->] (6,14.3) -- (6,13.7);
\draw [->] (10,14.3) -- (6.3,13.72);
\draw [->] (6,14.3) -- (9.7,13.72);
\end{tikzpicture}
\end{center}
\caption{Theorems and their dependencies.}
\label{fig:outline}
\end{figure}

\subsection{Outline}

This article is dealing with three ``soundness'' properties and three
``completeness'' properties, so it is important to carefully 
explain what we're doing and when; the following discussion is represented
graphically in Figure~\ref{fig:outline}.

We present a new proof of the completeness {\it of focusing} (the
focalization property, Theorem~\ref{thm:completeness}) for
intuitionistic logic; the proof of the focalization property follows
from the {\it internal} soundness and completeness of the focused
sequent calculus (cut admissibility, Theorem~\ref{thm:cut}, and
identity expansion, Theorem~\ref{thm:identity}).  We will start, in
Section~\ref{sec:logic}, by motivating a polarized presentation of
logic that syntactically differentiates the synchronous and
asynchronous propositions. We then present a focused sequent calculus
for polarized propositional intuitionistic logic and formally state
the soundness and completeness of focusing. We also prove the
soundness of focusing (the de-focalization property,
Theorem~\ref{thm:soundness}) in this section, but it's pretty boring
and independent of the proofs of cut admissibility, identity
expansion, and the completeness of focusing.

Internal soundness for the focused sequent calculus is established by
the cut admissibility theorem in Section~\ref{sec:cut}, and internal
completeness for the focused sequent calculus is established in
Section~\ref{sec:expansion} using a generalization of the identity
expansion theorem first developed in \cite{simmons11weak}. In
Section~\ref{sec:focalization} we prove the focalization property by
showing {\it unfocused admissibility}, a group of lemmas establishing
that the focused sequent calculus can act like an unfocused sequent
calculus.  Finally, rather than proving the internal soundness and
completeness (cut and identity) for the unfocused system directly, we
show that these properties can be established as corollaries of the
first four theorems.  In Section~\ref{sec:previous} we conclude with
an overview of existing proofs of the focalization property.

We will henceforth avoid using 
the words {\it soundness} and {\it completeness} as much as possible. 
Instead, we will refer
to the cut admissibility and identity theorems 
for the focused and unfocused sequent calculi by name, and will refer
to the soundness and completeness of focusing as de-focalization 
and focalization, respectively.

\section{Polarized logic}
\label{sec:logic}

There is a significant line of work on polarity in logic dating back
to Andreoli \shortcite{andreoli92logic} and Girard
\shortcite{girard93unity}. That line of work holds that 
the asynchronous and
synchronous propositions are syntactic refinements of the set of
propositions. We can determine the synchronous or asynchronous
character of a proposition by inspecting its outermost
connective.\footnote{Linear logic naturally has two polarities. Other
  systems, like Girard's LU and Liang and Miller's LKU, use more than
  these two polarities
  \cite{girard93unity,liang11focused}. Furthermore, in LU the polarity
  of a proposition is determined by more than just the outermost
  connective.}

In a 1991 note published to the LINEAR list \cite{girard91sex}, Girard
introduced the idea of syntactically differentiating the {\it
  positive} propositions (those Andreoli called synchronous) from the
{\it negative} propositions (those Andreoli called asynchronous) while
mediating between the two with {\it shifts}: the upshift
${\uparrow}A^+$ includes positive propositions in the negative ones,
and the downshift ${\downarrow}A^-$ includes negative propositions in
the positive ones. This {\it polarization}\footnote{For the purposes
  of this article we are making somewhat artificial distinction
  between {\it polarity}, Andreoli's classification of propositions as
  asynchronous and synchronous, and {\it polarization}, the
  segregation of these two classes as positive or negative
  propositions by using shifts. This distinction is not a standard
  one: ``synchronous'' and ``positive'' are elsewhere used
  interchangeably, ``polarity'' is used to describe what we call
  ``polarization,'' and so on.} of logic was developed further by
Girard in Ludics \cite{girard01locus} and treated extensively in
Laurent and Zeilberger's Ph.D. theses
\cite{laurent02etude,zeilberger09logical}.
These are the propositions, positive and negative, for polarized
intuitionistic logic:
\[
\begin{array}{rcl}
A^+, B^+, C^+ & ::= & p^+ \mid {\downarrow} A^- \mid \bot \mid A^+ \vee B^+
  \mid \top^+ \mid A^+ \wedge^+ B^+
\\
A^-, B^-, C^- & ::= & p^- \mid {\uparrow} A^+ \mid A^+ \supset B^-
  \mid \top^- \mid A^- \wedge^- B^-
\end{array}
\]

Linear logic is able to unambiguously assign all connectives
to one category or the other, but in intuitionistic logic, truth
$\top$ and conjunction $P_1 \wedge P_2$
can be understood as having either a positive character (corresponding to
${\bf 1}$ and $A^+ \otimes B^+$ in linear logic) or a negative character
(corresponding to $\top$ and $A^- \with B^-$ in linear logic). 
We take the maximally general approach and allow both versions of 
truth and conjunction, which are decorated to emphasize their polarity.

\begin{figure}
\begin{align*}
(p^+)^\bullet & = p^+
  & (p^-)^\bullet & = p^-\\
({\downarrow}A^-)^\bullet & = (A^-)^\bullet
 & ({\uparrow}A^+)^\bullet & = (A^+)^\bullet\\
(\bot)^\bullet & = \bot\\
(A^+ \vee B^+)^\bullet & = (A^+)^\bullet \vee (B^+)^\bullet
 & (A^+ \supset B^-)^\bullet & = (A^+)^\bullet \supset (B^-)^\bullet\\
(\top^+)^\bullet & = \top
 & (\top^-)^\bullet & = \top\\
(A^+ \wedge^+ B^+)^\bullet & = (A^+)^\bullet \wedge (B^+)^\bullet
 & (A^- \wedge^- B^-)^\bullet & = (A^-)^\bullet \wedge (B^-)^\bullet
\end{align*}
\caption{Erasure of polarized propositions.}
\label{fig:erasure}
\end{figure}

The shifts introduced by Girard were modalities that might change the
provability of a proposition. We adopt the later stance of Zeilberger
\shortcite{zeilberger09logical}, McLaughlin and Pfenning
\shortcite{mclaughlin09efficient}, and others: shifts influence the
structure of derivations, but not the provability of
propositions. Therefore, we expect there to be a focused derivation of
$A^+$ or $A^-$ if and only if there is an unfocused derivation of
$(A^+)^\bullet$ or $(A^-)^\bullet$, where $(-)^\bullet$ is the {\it
  erasure} function given in Figure~\ref{fig:erasure}.  The polarity
of an atomic proposition can be arbitrary as long as it is consistent,
as if each individual atomic proposition $p$ is really intrinsically
positive or negative, but the calculus in Figure~\ref{fig:unfoc}
didn't notice.

Shifts and polarization are computationally interesting phenomena. Our
view of polarization lines up with the {\it call-by-push-value} system
independently developed by Levy: positive propositions correspond to
{\it value types} and negative propositions correspond to {\it
  computation types} \cite{levy04call}.  Shifts are also useful in
theorem proving. By employing different {\it polarization strategies},
the name for partial inverses of erasure, the Imogen theorem prover can
simulate fully-focused LJF proof search, proof search in an unfocused
logic like Kleene's $G_3$, and proof search in many other
partially-focused systems in between \cite{mclaughlin09efficient}.
Furthermore, forward-chaining versus backward-chaining logic
programming can be seen as arising from particular polarization
strategies in uniform proof search \cite{chaudhuri10logical}. However,
in this article polarization is primarily a technical device. Shifts
makes it clearer that certain proofs are structurally inductive over
propositions, especially identity expansion
(Theorem~\ref{thm:identity}).  Additionally, the focus-interrupting
nature of the shift is a critical part of our unfocused admissibility
lemmas.\footnote{Something like a shift is necessary for Liang and
  Miller's focalization proofs as well. Because their logic has
  polarity but no shifts, they construct {\it delays} $\delta^+(A) =
  \top^+ \wedge^+ A$ and $\delta^-(A) = \top^+ \supset A$ to force
  asynchronous or synchronous connectives to behave (respectively)
  like synchronous or asynchronous ones \cite{liang09focusing}. This
  gets in the way of defining logical fragments by adding or removing
  connectives, as the presence of the connective $\top^+$ is
  fundamental to their completeness proof.}

\subsection{Sequent calculus}\label{sec:logic-scalc}

We will develop our focused sequent calculus in two stages; in the
first stage we do not consider atomic propositions. We can present
sequents for our polarized logic in two equivalent ways. In the {\it
  one-sequent} view, we say that all sequents have the form
$\efoc{\Gamma}{L}{U}$. The components of a sequent are defined by the
following (currently incomplete) grammar:
\begin{align*}
& \mbox{\it Hypothetical contexts} & 
\Gamma & ::= \cdot \mid \Gamma, A^- \mid \ldots
\\
& \mbox{\it Inversion contexts} &
\Omega & ::= \cdot \mid A^+, \Omega
\\
& \mbox{\it Antecedents} & 
L & ::= \Omega \mid [A^-]
\\
& \mbox{\it Succedents} & 
U & ::= [A^+] \mid A^+ \mid A^- \mid \ldots
\end{align*}
This first view requires us to further restrict the form of sequents
for two reasons. First, we only want to focus on one proposition at a
time, so only one right focus $[A^+]$ or left focus $[A^-]$ should be
present in a sequent.  Second, focus and inversion phases should not
overlap, so it must be the case that $L = \cdot$ when $U = [A^+]$ and
that $U \neq A^-$ when $L = [A^-]$. Given this
restriction, the first view of sequents is equivalent to a {\it
  three-sequent} view in which there are three different sequent
forms:
\begin{itemize}
\item Right focus: $\rfoc{\Gamma}{A^+}$, where $L = \cdot$ and is
  therefore omitted,
\item Inversion: $\ifoc{\Gamma}{\Omega}{U}$, where $U \neq [A^+]$, and
\item Left focus: $\lfoc{\Gamma}{A^-}{U}$, where $U$ is {\it stable}
  (more about this shortly).
\end{itemize}

\begin{figure}
\fbox{$\rfoc{\Gamma}{A^+}$ -- {\it right focus}} 
\[
\infer[{\downarrow}_R]
{\rfoc{\Gamma}{{\downarrow}A^-}}
{\ifoc{\Gamma}{\cdot}{A^-}}
\]
\[
\mbox{\it (no rule $\bot_R$)}
\qquad
\infer[\vee_{R1}]
{\rfoc{\Gamma}{A^+ \vee B^+}}
{\rfoc{\Gamma}{A^+}}
\qquad
\infer[\vee_{R2}]
{\rfoc{\Gamma}{A^+ \vee B^+}}
{\rfoc{\Gamma}{B^+}}
\]
\[
\infer[\top^+_R]
{\rfoc{\Gamma}{\top^+}}
{}
\qquad
\infer[\wedge^+_R]
{\rfoc{\Gamma}{A^+ \wedge^+ B^+}}
{\rfoc{\Gamma}{A^+}
 &
 \rfoc{\Gamma}{B^+}}
\]

\medskip
\fbox{$\ifoc{\Gamma}{\Omega}{U}$ -- {\it inversion, $U \neq [A^+]$}}
\[
\infer[{\it foc}_R]
{\ifoc{\Gamma}{\cdot}{A^+}}
{\rfoc{\Gamma}{A^+}}
\qquad
\infer[{\it foc}_L]
{\ifoc{\Gamma, A^-}{\cdot}{U}}
{\stable{U}
 &
 \lfoc{\Gamma, A^-}{A^-}{U}}
\]
\[
\infer[\downarrow_L]
{\ifoc{\Gamma}{{\downarrow}A^-, \Omega}{U}}
{\ifoc{\Gamma, A^-}{\Omega}{U}}
\]
\[
\infer[\bot_L]
{\ifoc{\Gamma}{\bot, \Omega}{U}}
{}
\qquad
\infer[\vee_L]
{\ifoc{\Gamma}{A^+ \vee B^+, \Omega}{U}}
{\ifoc{\Gamma}{A^+, \Omega}{U}
 &
 \ifoc{\Gamma}{B^+, \Omega}{U}}
\]
\[
\infer[\top^+_L]
{\ifoc{\Gamma}{\top^+, \Omega}{U}}
{\ifoc{\Gamma}{\Omega}{U}}
\qquad
\infer[\wedge^+_L]
{\ifoc{\Gamma}{A^+ \wedge^+ B^+, \Omega}{U}}
{\ifoc{\Gamma}{A^+, B^+, \Omega}{U}}
\]
\[
\infer[{\uparrow}_R]
{\ifoc{\Gamma}{\cdot}{{\uparrow}A^+}}
{\ifoc{\Gamma}{\cdot}{A^+}}
\qquad
\infer[\supset_R]
{\ifoc{\Gamma}{\cdot}{A^+ \supset B^-}}
{\ifoc{\Gamma}{A^+}{B^-}}
\]
\[
\infer[\top^-_R]
{\ifoc{\Gamma}{\cdot}{\top^-}}
{}
\qquad
\infer[\wedge^-_R]
{\ifoc{\Gamma}{\cdot}{A^- \wedge^- B^-}}
{\ifoc{\Gamma}{\cdot}{A^-}
 &
 \ifoc{\Gamma}{\cdot}{B^-}}
\]

\medskip
\fbox{$\lfoc{\Gamma}{A^-}{U}$ -- {\it left focus, $U$ must be stable}}
\[
\infer[{\uparrow}_L]
{\lfoc{\Gamma}{{\uparrow}A^+}{U}}
{\ifoc{\Gamma}{A^+}{U}}
\qquad
\infer[{\supset}_L]
{\lfoc{\Gamma}{A^+ \supset B^-}{U}}
{\rfoc{\Gamma}{A^+}
 &
 \lfoc{\Gamma}{B^-}{U}}
\]
\[
\mbox{\it (no rule $\top^-_L$)}
\qquad
\infer[\wedge^-_{L1}]
{\lfoc{\Gamma}{A^- \wedge^- B^-}{U}}
{\lfoc{\Gamma}{A^-}{U}}
\qquad
\infer[\wedge^-_{L2}]
{\lfoc{\Gamma}{A^- \wedge^- B^-}{U}}
{\lfoc{\Gamma}{B^-}{U}}
\]

\medskip
\fbox{$\stable{U}$}
\vspace{-12pt}
\[
\infer
{\stable{A^+}}
{}
\]
\caption{Focused sequent calculus for polarized intuitionistic logic
  (sans suspended propositions).}
\label{fig:foc}
\end{figure}

The sequent calculus for polarized intuitionistic logic in
Figure~\ref{fig:foc} is presented in terms of this three-sequent view.
The {\it right focus} sequent $\rfoc{\Gamma}{A^+}$ describes a state
in which non-invertible right rules are being applied to positive
propositions, the {\it left focus} sequent $\lfoc{\Gamma}{A^-}{U}$
describes a state in which non-invertible left rules are being applied
to negative propositions, and the {\it inversion} sequent
$\ifoc{\Gamma}{\Omega}{U}$ describes everything else. While we treat
the hypothetical context $\Gamma$ informally as a multiset, the {\it
  inversion context} $\Omega$ is {\it not} a multiset. Instead, it
should be thought of as an ordered sequence of positive propositions:
the empty sequence is written as ``$\cdot$'' and ``$,$'' is an
associative append operator.  Whenever the inversion context $\Omega$ is
non-empty, the {\it only} applicable rule is the one that decomposes
the left-most positive connective in $\Omega$.

The picture in Figure~\ref{fig:foc} is quite uniform: every rule
except for ${\it foc}_R$ and ${\it foc}_L$ breaks down a single
connective, while the shift rules regulate the focus and inversion
phases.  Read from bottom up, they put an end to the process of
breaking down a proposition under focus (${\downarrow}_R$,
${\uparrow}_L$) and to the process of breaking down a proposition with
inversion (${\downarrow}_L$, ${\uparrow}_R$). All other rules maintain
focus or inversion on the subformulas of a proposition.

The conclusions of ${\it foc}_R$ and ${\it foc}_L$ are inversion
sequents with empty inversion contexts and succedents $U$ that are
stable, meaning that there is no possibility of applying an invertible
rule.  These {\it stable sequents} (sometimes called {\it neutral})
have an important place in focused sequent calculi.  In the
introduction, we claimed that there was an essential equivalence
between LJF and our focused presentation, but this is only true if we
ignore the internal structure of focus and inversion phases and work
with {\it synthetic rules}, the derivation fragments comprised of one
or more inversion phases stacked on top of a single focused phase. In
the synthetic view of focusing, we abstract away from the internal
structure of focusing phases to emphasize the stable sequents that lie
between them \cite{andreoli01focussing}. LJF and our presentation of
polarized intuitionistic logic have different internal structure, but
we claim that the systems give rise to the same synthetic
rules.\footnote{Our presentation does not have the first-order
  quantifiers present in LJF, and LJF lacks a negative unit $\top^-$,
  but quantifiers can be added to our system easily, and the same is
  true for $\top^-$ in LJF.} Aside from our use of shifts, we depart
from Liang and Miller in ways that are technically relevant but that
are invisible at the level of synthetic connectives.

The stability requirement for $U$ in left focus sequents
$\lfoc{\Gamma}{A^-}{U}$ can equivalently be stated as a extra premise
$\stable{U}$ for the rule ${\uparrow}_L$, which is done in the
accompanying Twelf development. (Our placement of the premise on ${\it
  foc}_L$ shows a bit of bias towards bottom-up proof construction.)

\subsection{Suspended propositions}\label{sec:susp}

The pleasant picture of focusing given above must become more
complicated when we
consider atomic propositions. Atomic propositions are best understood
as stand-ins for arbitrary propositions, and so our polarized logic
has both positive atomic propositions (stand-ins for arbitrary
positive propositions) and negative atomic propositions (stand-ins for
arbitrary negative propositions).

When we are performing inversion and we reach an atomic proposition,
we do not have enough information to break down that proposition any
further, but we have not reached a shift. We have to do something
different. What we do is {\it suspend} that atomic proposition, either
in the hypothetical context or in the succedent. We represent a
suspended atomic proposition as $\susp{p^+}$ or $\susp{p^-}$. If we
wanted to closely follow existing focused sequent calculi, we would
introduce two more rules for proving atomic propositions in focus
using a suspended atomic proposition. The resulting extension of
Figure~\ref{fig:foc} would look something like this:
\[
\infer
{\ifoc{\Gamma}{p^+, \Omega}{U}}
{\ifoc{\Gamma, \susp{p^+}}{\Omega}{U}}
\qquad
\infer
{\ifoc{\Gamma}{\cdot}{p^-}}
{\ifoc{\Gamma}{\cdot}{\susp{p^-}}}
\qquad
\infer
{\rfoc{\Gamma, \susp{p^+}}{p^+}}
{}
\qquad
\infer
{\lfoc{\Gamma}{p^-}{\susp{p^-}}}
{}
\qquad
\infer
{\stable{\susp{p^-}}}
{}
\]

This treatment is not incorrect and is obviously analogous to the ${\it
  init}$ rule from the unfocused system in
Figure~\ref{fig:unfoc}. Nevertheless, we contend that this is a design
error and a large part of why it has historically been difficult to
prove the identity theorem for focused systems. We generalize the
rules above
by allowing the hypothetical context to contain arbitrary
suspended positive propositions (not just atomic positive
propositions) and allowing the succedent to contain arbitrary
suspended negative propositions (not just atomic negative
propositions).

\begin{figure}
\[
\infer[{\it id}^+]
{\rfoc{\Gamma, \susp{A^+}}{A^+}}
{}
\qquad
\infer[{\it id}^-]
{\lfoc{\Gamma}{A^-}{\susp{A^-}}}
{}
\]
\[
\infer[\eta^+]
{\ifoc{\Gamma}{p^+, \Omega}{U}}
{\ifoc{\Gamma, \susp{p^+}}{\Omega}{U}}
\qquad
\infer[\eta^-]
{\ifoc{\Gamma}{\cdot}{p^-}}
{\ifoc{\Gamma}{\cdot}{\susp{p^-}}}
\qquad
\infer
{\stable{\susp{A^-}}}
{}
\]
\caption{Focused sequent calculus, extended with suspended propositions}
\label{fig:foc-atom}
\end{figure}

This generalization allows us to finally give the complete grammar of 
hypothetical contexts and succedents: 
\begin{align*}
& \mbox{\it Hypothetical contexts} & 
\Gamma & ::= \cdot \mid \Gamma, A^- \mid \Gamma, \susp{A^+}
\\
& \mbox{\it Succedents} & 
U & ::= [A^+] \mid A^+ \mid A^- \mid \susp{A^-}
\end{align*}
The rules for atomic propositions, extending Figure~\ref{fig:foc}, are
given in Figure~\ref{fig:foc-atom}. The $\eta^+$ and $\eta^-$ rules
are the same as the ones we discussed above, reflecting the fact that
the inversion process {\it must} suspend itself at an atomic
proposition and {\it should not} suspend itself any earlier.  The
${\it id}^+$ and ${\it id}^-$ rules directly describe an identity
or hypothesis principle, but only for suspended propositions.

These more general ${\it id}^+$ and ${\it id}^-$ rules allow us to define two
substitution principles, which are critical for the proof of identity
expansion in Section~\ref{sec:expansion}. The derivation of a
right-focused sequent $[A^+]$ can discharge $\susp{A^+}$ in the
hypothetical context, and the derivation of a left-focused sequent
$[A^-]$ can discharge $\susp{A^-}$ in the succedent. Written as
admissible rules, these two {\it focal substitution} principles are as
follows:
\[
\infer-[{\it subst}^+]
{\efoc{\Gamma}{L}{U}}
{\rfoc{\Gamma}{A^+}
 &
 \efoc{\Gamma, \susp{A^+}}{L}{U}}
\qquad
\infer-[{\it subst}^-]
{\efoc{\Gamma}{L}{U}}
{\efoc{\Gamma}{L}{\susp{A^-}}
 &
 \lfoc{\Gamma}{A^-}{U}}
\]
It is straightforward to establish the positive focal substitution
principle by induction over the derivation of $\efoc{\Gamma,
  \susp{A^+}}{L}{U}$, and it is likewise straightforward to establish
the negative substitution principle by induction over the derivation
of $\efoc{\Gamma}{L}{\susp{A^-}}$. When the last rule in the
derivation we're inducting over is ${\it id}^+$ or ${\it id}^-$, we
return the derivation we're not inducting over, and in every other
case we apply the induction hypothesis directly.\footnote{In the case
  of the rules $\eta^+$ and ${\downarrow}_L$, we also apply an
  admissible weakening principle to the derivation we're not inducting
  over.}

The admissible rules ${\it subst}^+$ and ${\it subst}^-$ are {\it
  uniform} substitution principles. This means that, in the
accompanying Twelf development, it is possible to get them both for
free from the LF function space, the same way we get weakening 
and contraction of the
hypothetical context for free and generally take it for granted. This
is natural in the case of positive focal substitution: we interpret
the suspended atomic proposition $\susp{A^+}$ as a uniform assumption
that $A^+$ is provable in right focus. It is more counterintuitive to
get negative focal substitution for free in LF; we refer the reader to
the accompanying Twelf development for details. 

The logic extended with these more general ${\it id}^+$ and ${\it
  id}^-$ rules conservatively extends the logic with the more
traditional rules we initially proposed. Reading rules from bottom to
top, the rules $\eta^+$ and $\eta^-$ are the only ones that introduce
suspended propositions. Therefore, given the derivation of a sequent
where every suspended proposition is atomic, we know that every
instance of ${\it id}^+$ and ${\it id}^-$ in that derivation acts on an
atomic proposition. We call sequents where every suspended proposition
is atomic {\it suspension-normal} sequents. Certain operations, in
particular erasure and cut admissibility, are only defined on
suspension-normal sequents and derivations of these sequents.

\subsection{Erasure and focalization}

\begin{figure}
\begin{align*}
 & (\Omega)^\bullet
 & (\cdot)^\bullet & = \cdot
 & (A^+, \Omega)^\bullet & = (A^+)^\bullet, (\Omega)^\bullet
\\
 & (\Gamma)^\circledast
 & (\cdot)^\circledast & = \cdot 
 & (\Gamma, A^-)^\circledast & = (\Gamma)^\circledast, (A^-)^\bullet
 & (\Gamma, \susp{p^+})^\circledast & = (\Gamma)^\circledast, p^+
\\
 & (L)^\circledast
 & ([A^-])^\circledast & = (A^-)^\bullet
 & (\Omega)^\circledast & = (\Omega)^\bullet
\\
 & (U)^\circledast
 & ([A^+])^\circledast & = (A^+)^\bullet
 & (A^+)^\circledast & = (A^+)^\bullet
 & (A^-)^\circledast & = (A^-)^\bullet
 & (\susp{p^-})^\circledast & = p^-
\end{align*}
\caption{Erasure of contexts and succedents. $(A^+)^\bullet$ and
  $(A^-)^\bullet$ are defined in Figure~\ref{fig:erasure}.}
\label{fig:erasure-ctx}
\end{figure}

We presented the erasure of propositions in Figure~\ref{fig:erasure},
and Figure~\ref{fig:erasure-ctx} describes the erasure of a polarized
contexts and sequents. Note that erasure is only defined on
hypothetical contexts $\Gamma$ and succedents $U$ that are
suspension-normal.
 
Erasure is a pretty boring operation, important mainly because it
allows us to state soundness and completeness of focusing. We want to
understand completeness in terms of stable, suspension-normal
sequents, so the correctness of focusing states that, if $\Gamma$
and $U$ are stable and suspension-normal, $\ifoc{\Gamma}{\cdot}{U}$ if
and only if $\seq{(\Gamma)^\circledast}{(U)^\circledast}$. The
backward (completeness) direction is focalization and the forward
(soundness) direction is de-focalization.

Many different polarized propositions will typically erase to the same
unpolarized proposition.  The proposition used in the example from the
introduction, $(p \wedge q) \supset (r \wedge s) \supset (p \wedge
r)$, is the erasure of each of the following:
\begin{align}
{\downarrow}(p \wedge^- q) \supset
 {\downarrow}(r \wedge^- s) & \supset 
 (p \wedge^- r)\label{eqn:neg}\\
(p \wedge^+ q) \supset 
 (r \wedge^+ s) & \supset
 {\uparrow}(p \wedge^+ r)\label{eqn:pos}\\
{\downarrow}({\uparrow}p \wedge^- {\uparrow}q) \supset
 {\uparrow}{\downarrow}({\downarrow}({\uparrow}r \wedge^- {\uparrow}s) & \supset
 {\uparrow}{\downarrow}({\uparrow}p \wedge^- {\uparrow}r))\label{eqn:seq}
\end{align}
Note that the first proposition implies a 
negative polarity for all atomic propositions and the last two propositions
imply a positive polarity. 

The first and second propositions
each have {\it exactly one} focused 
derivation, just as the unpolarized propositions
had exactly one normal natural deduction proof. The unique derivation of 
(\ref{eqn:neg}) is structurally similar to the unfocused derivation from the
introduction:
\[
\infer[\supset_R]
{\ifoc{\cdot}{\cdot}
      {{\downarrow}(p \wedge^- q) \supset
       {\downarrow}(r \wedge^- s) \supset 
       (p \wedge^- r)}}
{\infer[{\downarrow}_L]
 {\ifoc{\cdot}{{\downarrow}(p \wedge^- q)}{{\downarrow}(r \wedge^- s) \supset 
  (p \wedge^- r)}}
 {\infer[\supset_R]
  {\ifoc{p \wedge^- q}{\cdot}
        {{\downarrow}(r \wedge^- s) \supset (p \wedge^- r)}}
  {\infer[{\downarrow}_L]
   {\ifoc{p \wedge^- q}{{\downarrow}(r \wedge^- s)}{p \wedge^- r}}
   {\infer[\wedge^-_R]
    {\ifoc{p \wedge^- q, r \wedge^- s}{\cdot}{p \wedge^- r}}
    {\infer[\eta^-]
     {\ifoc{p \wedge^- q, r \wedge^- s}{\cdot}{p}}
     {\infer[{\it foc}_L]
      {\ifoc{p \wedge^- q, r \wedge^- s}{\cdot}{\susp{p}}}
      {\infer[\wedge^-_{L1}]
       {\lfoc{p \wedge^- q, r \wedge^- s}{p \wedge^- q}{\susp{p}}}
       {\infer[{\it id}^-]
        {\lfoc{p \wedge^- q, r \wedge^- s}{p}{\susp{p}}}
        {}}}}
     &
     \infer[\eta^-]
     {\ifoc{p \wedge^- q, r \wedge^- s}{\cdot}{r}}
     {\infer[{\it foc}_L]
      {\ifoc{p \wedge^- q, r \wedge^- s}{\cdot}{\susp{r}}}
      {\infer[\wedge^-_{L1}]
       {\lfoc{p \wedge^- q, r \wedge^- s}{r \wedge^- s}{\susp{r}}}
       {\infer[{\it id}^-]
        {\lfoc{p \wedge^- q, r \wedge^- s}{r}{\susp{r}}} 
        {}}}}}}}}}
\]

The unique derivation of (\ref{eqn:pos}) decomposes the proposition
in a completely different order:
\[
\infer[\supset_R]
{\ifoc{\cdot}{\cdot}
      {(p \wedge^+ q) \supset (r \wedge^+ s) \supset {\uparrow}(p \wedge^+ r)}}
{\infer[\wedge^+_L]
 {\ifoc{\cdot}{p \wedge^+ q}
       {(r \wedge^+ s) \supset {\uparrow}(p \wedge^+ r)}}
 {\infer[\eta^+]
  {\ifoc{\cdot}{p, q}{(r \wedge^+ s) \supset {\uparrow}(p \wedge^+ r)}}
  {\infer[\eta^+]
   {\ifoc{\susp{p}}{q}{(r \wedge^+ s) \supset {\uparrow}(p \wedge^+ r)}}
   {\infer[\supset_R]
    {\ifoc{\susp{p}, \susp{q}}{\cdot}
         {(r \wedge^+ s) \supset {\uparrow}(p \wedge^+ r)}}
    {\infer[\wedge^+_L]
     {\ifoc{\susp{p}, \susp{q}}{r \wedge^+ s}{{\uparrow}(p \wedge^+ r)}}
     {\infer[\eta^+]
      {\ifoc{\susp{p}, \susp{q}}{r, s}{{\uparrow}(p \wedge^+ r)}}
      {\infer[\eta^+]
       {\ifoc{\susp{p}, \susp{q}, \susp{r}}{s}{{\uparrow}(p \wedge^+ r)}}
       {\infer[{\uparrow}_R]
        {\ifoc{\susp{p}, \susp{q}, \susp{r}, \susp{s}}{\cdot}
              {{\uparrow}(p \wedge^+ r)}}
        {\infer[{\it foc}_R]
         {\ifoc{\susp{p}, \susp{q}, \susp{r}, \susp{s}}{\cdot}{p \wedge^+ r}}
         {\infer[\wedge^+_R]
          {\rfoc{\susp{p}, \susp{q}, \susp{r}, \susp{s}}{p \wedge^+ r}}
          {\infer[{\it id}^+]
           {\rfoc{\susp{p}, \susp{q}, \susp{r}, \susp{s}}{p}}
           {}
           &
           \infer[{\it id}^+]
           {\rfoc{\susp{p}, \susp{q}, \susp{r}, \susp{s}}{r}}
           {}}}}}}}}}}}}
\]
These examples illustrate how polarity and focusing can dramatically
reduce the bureaucratic nondeterminism present in the unfocused
sequent calculus. To be clear, however, we have {\it chosen} to reduce
that bureaucratic nondeterminism: the derivations of proposition
(\ref{eqn:seq}) are isomorphic to the unfocused derivations of the
original, unpolarized proposition.

Our statement of the focalization property applies to {\it all}
polarization strategies.  Having obtained this strong focalization
property, if we are given an unfocused derivation of any unpolarized
sequent (such as $\seq{\cdot}{(p \wedge q) \supset (r \wedge s)
  \supset (p \wedge r)}$), we can use any polarization strategy at our
disposal to turn the unpolarized sequent into a polarized sequent
(such as $\ifoc{\cdot}{\cdot}{{\downarrow}((p \wedge^+ q) \supset (r
  \wedge^+ s) \supset {\downarrow}(p \wedge^+ r))}$) and then use
focalization to transform the unfocused derivation into a focused
derivation. Our proof that unfocused cut and identity follow from
focused cut and identity does require that we know about {\it some}
polarization strategy, but that will be the extent to which the
present technical development relies on the matter.

With the exception of Zeilberger \shortcite{zeilberger08unity}, proofs
of the focalization property tend not to operate on the basis of
erasure. Erasure-based polarization only emerges clearly as an option
in a logic with shifts; Andreoli's focused classical linear logic
\shortcite{andreoli92logic}, Chaudhuri's focused intuitionistic linear
logic \shortcite{chaudhuri06focused}, and Liang and Miller's LJF
\shortcite{liang09focusing} all approach focalization for a logic
where there are no shifts and where polarity is derived from a
proposition's topmost connective.  From our polarized perspective,
these approaches can all be seen as defining a particular polarization
strategy $(\seq{\Gamma}{P})^\circ$ that transforms unpolarized
sequents into polarized ones. It is then possible to state and prove a
strictly weaker focalization property: that $(\seq{\Gamma}{P})^\circ$
is derivable if and only if $\seq{\Gamma}{P}$ is derivable.

\subsection{Proof terms}\label{sec:proofterms}

\begin{figure}
\fbox{$\rfoct{\Gamma}{V}{A^+}$ -- {\it values} $V$}
\[
\infer[{\it id}^+]
{\rfoct{\Gamma, z{:}\susp{A^+}}{z}{A^+}}
{}
\qquad
\infer[{\downarrow}_R]
{\rfoct{\Gamma}{\dsrt{N}}{{\downarrow}A^-}}
{\ifoct{\Gamma}{\cdot}{N}{A^-}}
\]
\[
\mbox{\it (no rule $\bot_R$)}
\qquad
\infer[\vee_{R1}]
{\rfoct{\Gamma}{\mathsf{inl}\,V}{A^+ \vee B^+}}
{\rfoct{\Gamma}{V}{A^+}}
\qquad
\infer[\vee_{R2}]
{\rfoct{\Gamma}{\mathsf{inr}\,V}{A^+ \vee B^+}}
{\rfoct{\Gamma}{V}{B^+}}
\]
\[
\infer[\top^+_R]
{\rfoct{\Gamma}{\langle\rangle^+}{\top^+}}
{}
\qquad
\infer[\wedge^+_R]
{\rfoct{\Gamma}{\langle V_1, V_2 \rangle^+}{A^+ \wedge^+ B^+}}
{\rfoct{\Gamma}{V_1}{A^+} 
 &
 \rfoct{\Gamma}{V_2}{B^+}}
\]

\medskip
\fbox{$\ifoct{\Gamma}{\Omega}{N}{U}$ -- {\it terms $N$, $U \neq [A^+]$}}
\[
\infer[{\it foc}_R]
{\ifoct{\Gamma}{\cdot}{\rft{V}}{A^+}}
{\rfoct{\Gamma}{V}{A^+}}
\qquad
\infer[{\it foc}_L]
{\ifoct{\Gamma, x{:}A^-}{\cdot}{\lft{x}{S}}{U}}
{\stable{U}
 &
 \lfoct{\Gamma, x{:}A^-}{S}{A^-}{U}}
\]
\[
\infer[\eta^+]
{\ifoct{\Gamma}{p^+, \Omega}{\etalt{z}{N}}{U}}
{\ifoct{\Gamma, z{:}\susp{p^+}}{\Omega}{N}{U}}
\qquad
\infer[{\downarrow}_L]
{\ifoct{\Gamma}{{\downarrow}A^-, \Omega}{\dslt{x}{N}}{U}}
{\ifoct{\Gamma, x{:}A^-}{\Omega}{N}{U}}
\]
\[
\infer[\bot_L]
{\ifoct{\Gamma}{\bot, \Omega}{\mathsf{abort}}{U}}
{}
{}
\qquad
\infer[\vee_L]
{\ifoct{\Gamma}{A^+ \vee B^+, \Omega}{[N_1, N_2]}{U}}
{\ifoct{\Gamma}{A^+, \Omega}{N_1}{U}
 &
 \ifoct{\Gamma}{B^+, \Omega}{N_2}{U}}
\]
\[
\infer[\top^+_L]
{\ifoct{\Gamma}{\top^+, \Omega}{\langle\rangle.N}{U}}
{\ifoct{\Gamma}{\Omega}{N}{U}}
\qquad
\infer[\wedge^+_L]
{\ifoct{\Gamma}{A^+ \wedge^+ B^+, \Omega}{\times N}{U}}
{\ifoct{\Gamma}{A^+, B^+, \Omega}{N}{U}}
\]
\[
\infer[\eta^-]
{\ifoct{\Gamma}{\cdot}{\etart{N}}{p^-}}
{\ifoct{\Gamma}{\cdot}{N}{\susp{p^-}}}
\qquad
\infer[{\uparrow}_R]
{\ifoct{\Gamma}{\cdot}{\usrt{N}}{{\uparrow}A^+}}
{\ifoct{\Gamma}{\cdot}{N}{A^+}}
\qquad
\infer[{\supset}_R]
{\ifoct{\Gamma}{\cdot}{\lambda N}{A^+ \supset B^-}}
{\ifoct{\Gamma}{A^+}{N}{B^-}}
\]
\[
\infer[{\top^-}_R]
{\ifoct{\Gamma}{\cdot}{\langle\rangle^-}{\top^-}}
{}
\qquad
\infer[{\wedge}^-_R]
{\ifoct{\Gamma}{\cdot}{\langle N_1, N_2 \rangle^-}{A^- \wedge^- B^-}}
{\ifoct{\Gamma}{\cdot}{N_1}{A^-}
 &
 \ifoct{\Gamma}{\cdot}{N_2}{B^-}}
\]

\medskip
\fbox{$\lfoct{\Gamma}{S}{A^-}{U}$ -- {\it spines $S$, $U$ must be stable}}
\[
\infer[{\it id}^-]
{\lfoct{\Gamma}{\textsc{nil}}{A^-}{\susp{A^-}}}
{}
\qquad
\infer[{\uparrow}_L]
{\lfoct{\Gamma}{\uslt{N}}{{\uparrow}A^+}{U}}
{\ifoct{\Gamma}{A^+}{N}{U}}
\qquad
\infer[{\supset}_L]
{\lfoct{\Gamma}{V; S}{A^+ \supset B^-}{U}}
{\rfoct{\Gamma}{V}{A^+}
 &
 \lfoct{\Gamma}{S}{B^-}{U}}
\]
\[
\mbox{\it (no rule $\top^-_L$)}
\qquad
\infer[\wedge^-_{L1}]
{\lfoct{\Gamma}{\pi_1; S}{A^- \wedge^- B^-}{U}}
{\lfoct{\Gamma}{S}{A^-}{U}}
\qquad
\infer[\wedge^-_{L2}]
{\lfoct{\Gamma}{\pi_2; S}{A^- \wedge^- B^-}{U}}
{\lfoct{\Gamma}{S}{B^-}{U}}
\]

\medskip
\fbox{$\stable{U}$}
\vspace{-12pt}
\[
\infer
{\stable{A^+}}
{}
\qquad
\infer
{\stable{\susp{A^-}}}
{}
\]
\caption{Proof terms for the focused sequent calculus.}
\label{fig:proofterms}
\end{figure}

While it is convenient and traditional to define a logic in terms of
rules, we follow Herbelin \shortcite{herbelin95lambda} in noting that it is 
sometimes 
easier to manipulate derivations using an appropriately-designed proof term
presentation of the logic. 
Our proof term language is primarily a
generalization of the {\it spine form} introduced by Cervesato and
Pfenning \shortcite{cervesato03linear}.\footnote{We also draw
  inspiration from the syntax of CLF \cite{watkins02concurrent},
  call-by-push-value \cite{levy04call}, and Modernized Algol
  \cite{harper12practical} for our syntax.}  Spine form is a proof
term assignment for the so-called uniform proofs, the focused fragment
of a logic that only includes the negative (or asynchronous)
propositions. In spine form, {\it terms} and {\it spines} correspond
to derivations of inversion sequents and left-focused sequents,
respectively; we also consider {\it values} corresponding to
derivations of right-focused sequents.
\[
\begin{array}{lrcl}
\mbox{\it Values}
 & V & ::= 
 & z \mid \dsrt{N} 
   \mid \mathsf{inl}\,V \mid \mathsf{inr}\,V
   \mid \langle\rangle^+ \mid \langle V_1, V_2 \rangle^+
\\
\mbox{\it Terms}
 & N, M & ::=
 & \rft{V} \mid \lft{x}{S}
   \mid \etalt{z}{N} \mid \dslt{x}{N}
   \mid \mathsf{abort} \mid [N_1, N_2]
   \mid \langle\rangle.N \mid \times N
\\ & & & 
   \mid \etart{N} \mid \usrt{N} \mid \lambda N
   \mid \langle\rangle^- \mid \langle N_1, N_2 \rangle^-
\\
\mbox{\it Spines}
 & S & ::=
 & \textsc{nil} \mid \uslt{N} \mid V; S \mid \pi_1; S \mid \pi_2; S
\end{array}
\]
The separation of our syntax into three categories corresponds to the
three-sequent view of our calculus. We will also refer to proof terms
generically as {\it expressions} $E$ when we want to invoke the
one-sequent view of our system.

Only two terms bind new variables. The term $\etalt{z}{N}$,
corresponding to the rule $\eta^+$, binds a positive variable $z$ (a
variable corresponding to a suspended proposition). The term
$\dslt{x}{N}$, corresponding to the rule ${\uparrow}_L$, binds a new
negative variable $x$ (a variable corresponding to a negative
proposition).  We will freely span the Curry-Howard correspondence (or
``propositions as types''), calling a value that corresponds to a
derivation of the right-focused sequent $\rfoc{\Gamma}{A^+}$ a value
focused on $A^+$ and calling a spine that corresponds to a derivation
of the left-focused sequent $\lfoc{\Gamma}{A^-}{B^+}$ or
$\lfoc{\Gamma}{A^-}{\susp{B^-}}$ a spine of type $B^+$ or $B^-$
(respectively) focused on $A^-$. Terms that correspond to stable
sequents $\ifoc{\Gamma}{\cdot}{A^+}$ and
$\ifoc{\Gamma}{\cdot}{\susp{A^-}}$ are (respectively) terms of type
$A^+$ or $A^-$.  Terms corresponding to derivations of more general
inversion sequents like $\ifoc{\Gamma}{\Omega}{A^+}$ and
$\ifoc{\Gamma}{\Omega}{A^-}$ are (respectively) terms of type $A^+$
introducing $\Omega$ and terms introducing $\Omega$ and $A^-$. We
reserve the word {\it type} for stable succedents to emphasize that
these are the important elements in the synthetic view of focusing.

It is possible to re-present the entire sequent calculus from
Figures~\ref{fig:foc}~and~\ref{fig:foc-atom} annotating sequents with
values, terms, and spines; the result is
Figure~\ref{fig:proofterms}. This ``Curry-style'' view, which sees
types as {\it extrinsic} to the proof terms, is helpful as a
reference, but it does not otherwise serve our purposes. Instead, we
will proceed with a ``Church-style'' view of types as {\it
  intrinsic}. This necessitates thinking of proof terms as carrying
some extra annotations; Pfenning writes these as superscripts
\cite{pfenning08church}, but which we will follow Girard in leaving
them implicit \cite{girard89proofs}.  In particular, positive
variables $z$ must be annotated with positive propositions,
negative variables $x$ must be annotated with negative propositions,
and $\mathsf{inl}$, $\mathsf{inr}$, $\pi_1$, and $\pi_2$ must be
annotated with the branch of the disjunction or conjunction that was
not taken.  This suffices to ensure that proof terms are in 1-to-1
correspondence with sequent calculus derivations modulo the structural
properties of exchange and weakening.\footnote{Our desire present cut
  admissibility and identity expansion using proof terms is one reason
  we use so many syntactic markers in our proof terms. These markers
  ($\times N$, $\etart{N}$, and $\usrt{N}$, etc.)~may be omitted in a
  Curry-style presentation.}

We will make a habit of presenting proof terms for admissible rules as
well. The admissible focal substitution principles labeled ${\it subst}^+$
and ${\it subst}^-$ above are respectively associated with the
functions $[V/z]E$ and $[E]S$ that act on proof terms:
\[
\infer-
{\efoct{\Gamma}{L}{[V/z]E}{U}}
{\rfoct{\Gamma}{V}{A^+}
 &
 \efoct{\Gamma, z{:}\susp{A^+}}{L}{E}{U}}
\qquad
\infer-
{\efoct{\Gamma}{L}{[E]S}{U}}
{\efoct{\Gamma}{L}{E}{\susp{A^-}}
 &
 \lfoct{\Gamma}{S}{A^-}{U}}
\]

\begin{figure}
\begin{tabbing}
\qquad \=
Proof term: \= $\lambda x_1.\lambda x_2.
  \langle \etart{\lft{x_1}{(\pi_1; \textsc{nil})}}, 
          \etart{\lft{x_2}{(\pi_1; \textsc{nil})}}\rangle^-$
\\
\>SML: \> \verb'fn x1 => fn x2 => (#1 x1, #1 x2)'
\\
\>Type: \> ${\downarrow}(p \wedge^- q) \supset
 {\downarrow}(r \wedge^- s) \supset 
 (p \wedge^- r)$
\\
\end{tabbing}

\begin{tabbing}
\qquad \=
Proof term: \= $\lambda {\times} z_1.z_2. 
\lambda {\times} z_3.z_4.
 \usrt{\rft{\langle z_1, z_3 \rangle^+}}$
\\
\>SML: \> \verb'fn (z1, z2) => fn (z3, z4) => (z1, z3)'
\\
\>Type: \> $(p \wedge^+ q) \supset 
 (r \wedge^+ s) \supset
 {\uparrow}(p \wedge^+ r)$
\\
\end{tabbing}

\begin{tabbing}
\qquad \=
Proof term: \= $\lambda f. \lambda g. 
    \lambda \etalt{z}{} \lambda[$\=
      $(\etalt{z_1}{\etart{\lft{f}{(z_1; z; \textsc{nil})}}}),$\\
\>\>\>$(\etalt{z_2}{\etart{\lft{g}{(\langle z_2, z\rangle^+; \textsc{nil})}}}]$
\\
\>SML: \> \verb'fn f => fn g => fn z => (fn Inl z1 => f z1 z'
\\   \>\> \verb'                          | Inr z2 => g (z2, z))'
\\
\>Type: \> ${\downarrow}(p^+ \supset s^+ \supset r^-) \supset
 {\downarrow}(q^+ \wedge^+ s^+ \supset r^-) \supset
 s^+ \supset (p^+ \vee q^+) \supset r^-$
\\
\end{tabbing}

\begin{tabbing}
\qquad \=
Proof term: \= $\lambda \etalt{z}{} \lambda f. \lambda g. 
 \{\lft{f}{(z; \uslt{[}}$\=$(
  \etalt{z_1}
        {\lft{g}{(z_1; \uslt{\etalt{z_3}{\rft{z_3}}})}}),$\\
\>\>\>$(\etalt{z_2}{\rft{z_2}})])\}$
\\
\>SML: \> \verb'fn z => fn f => fn g => (case (f z) of'
\\   \>\> \verb'                            Inl z1 => (case g z1 of z3 => z3)'
\\   \>\> \verb'                          | Inr z2 => z2)'
\\
\>Type: \> $p^+ \supset {\downarrow}(p^+ \supset {\uparrow}(q^+ \vee r^+)) \supset
 {\downarrow}(q^+ \supset {\uparrow}r^+) \supset {\uparrow}r^+$
\end{tabbing}
\caption{Some proof terms and their rough translation into Standard ML.}
\label{fig:sml-ex}
\end{figure}

\paragraph*{Patterns}
Our proof term calculus departs in one important way from most
presentations of focused proof terms. In other work, the trend is to
introduce the variables needed for an inversion phase all at once in a
syntactic entity called a {\it pattern}; one significant example is
Krishnaswami's presentation of ML-style pattern matching and pattern
compilation in the context of a focused sequent calculus
\cite{krishnaswami09focusing}. We do not use patterns because doing so
would not be faithful to the LF encoding of Figure~\ref{fig:foc} used
in the accompanying Twelf development; patterns cause the inductive
structure of proof terms and sequents to deviate, even if they remain
in 1-to-1 correspondence. 

While a full discussion of patterns is
beyond the scope of this article, we also want to suggest that our
choice is the natural one from the perspective of the sequent
calculus. Patterns are certainly relevant in the study of logic and
programming languages, but they seem more in line with natural
deduction presentations of logic or with higher-order focused
presentations, which can be seen as a synthesis of natural deduction
and sequent calculus presentations
\cite{zeilberger09logical,brock10focused}.

\paragraph*{Examples}
Using Standard ML's syntax as an imperfect proxy for a
natural-deduction system with pattern matching, we give, in
Figure~\ref{fig:sml-ex}, some proof terms and our suggestion as to the
corresponding natural deduction term. Note that if $M$ is a term
introducing $B^-$ then the proof term corresponding to
${\downarrow}A^- \supset B^-$ is $\lambda \dslt{x}{M}$, though in this
case the familiar construct is comprised of two smaller constructs,
the proof term corresponding to ${\supset}_R$ and the proof term
corresponding to ${\downarrow}_L$. The spine form $\uslt{M}$ comes
from Levy's CBPV, stands for {\it pattern match}, and corresponds to
\verb'case' in Standard ML.

\subsection{De-focalization}\label{sec:soundness}

We conclude this section by presenting the de-focalization property,
that $\ifoc{\Gamma}{\cdot}{U}$ implies
$\seq{(\Gamma)^\circledast}{(U)^\circledast}$, which we can prove independently
of any of the standard metatheoretic results for either system.
In order to generalize the induction hypothesis, we define
a new sequent form $\seq{\Gamma; \Psi}{P}$, where $\Psi$ is an ordered sequence 
that mimics the inversion context $\Omega$. The meaning of this 
sequent is defined
by two rules which force the ordered $\Psi$ context to introduce its contents
into the hypothetical context $\Gamma$ in a left-to-right order:
\[
\infer[\it cons]
{\seq{\Gamma; P, \Psi}{Q}}
{\seq{\Gamma, P; \Psi}{Q}}
\qquad
\infer[\it nil]
{\seq{\Gamma; \cdot}{Q}}
{\seq{\Gamma}{Q}}
\]

With this definition, we can state the appropriate generalization of
the induction hypothesis; our desired de-focalization property is a
corollary.

\begin{theorem}[De-focalization]\label{thm:soundness}
If $\efoc{\Gamma}{L}{U}$, then 
   $\seq{(\Gamma)^\circledast; (L)^\circledast}{(U)^\circledast}$
\end{theorem}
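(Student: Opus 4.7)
The plan is to proceed by structural induction on the derivation of $\efoc{\Gamma}{L}{U}$, casing on the last rule. The three-sequent view lets us treat right-focus, inversion, and left-focus derivations uniformly in the one-sequent form. Throughout the induction we assume the sequent is suspension-normal, which is preserved going upward because $\eta^+$ and $\eta^-$ only ever suspend atomic propositions. This ensures that whenever we encounter ${\it id}^+$ or ${\it id}^-$, the suspended proposition is atomic and erases to an atomic proposition, giving exactly the shape needed to apply the unfocused ${\it init}$ rule.

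A large number of cases are essentially trivial once we unfold the erasure functions. The shift rules (${\downarrow}_L$, ${\downarrow}_R$, ${\uparrow}_L$, ${\uparrow}_R$), the focusing rules (${\it foc}_R$, ${\it foc}_L$), and the $\eta^\pm$ rules all preserve the erased sequent, since shifts vanish under $(-)^\bullet$ and suspended atoms erase to themselves. In each of these cases the inductive hypothesis supplies the desired unfocused derivation directly, modulo routine rearrangements of the generalized context using $\it cons$ and $\it nil$.

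The substantive cases are the logical rules. Each focused right rule maps directly to its unfocused counterpart: for instance, $\supset_R$ has premise $\ifoc{\Gamma}{A^+}{B^-}$, from which the IH yields $\seq{(\Gamma)^\circledast; (A^+)^\bullet}{(B^-)^\bullet}$; applying $\it cons$ and then $\it nil$ produces $\seq{(\Gamma)^\circledast, (A^+)^\bullet}{(B^-)^\bullet}$, and one application of unfocused $\supset_R$ closes the case. The left-focus rules and the invertible left rules are handled analogously, applying the corresponding rule of Figure~\ref{fig:unfoc} after using $\it cons$ to move the principal formula from $\Psi$ into $\Gamma$, and (when needed) relying on the implicit contraction and admissible weakening of the unfocused system; for example, the $\wedge^+_L$ case must use both $\wedge_{L1}$ and $\wedge_{L2}$ in succession to extract both conjuncts of the single polarized hypothesis, weakening the IH with the unused $(A^+)^\bullet \wedge (B^+)^\bullet$ on the way.

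The main obstacle is bookkeeping rather than conceptual: one must carefully verify that the generalized sequent form $\seq{\Gamma; \Psi}{Q}$ tracks the inversion context $\Omega$ correctly as connectives are consumed, so that the $\it cons$/$\it nil$ manipulations always line up with the unfocused rule one wants to apply. Once that is in place, every case reduces to one application of an unfocused rule together with some structural shuffling, with no appeal to cut admissibility or identity; this is exactly why de-focalization is independent of the main results and can be discharged here without machinery from Sections~\ref{sec:cut} and~\ref{sec:expansion}. The desired statement $\ifoc{\Gamma}{\cdot}{U} \Longrightarrow \seq{(\Gamma)^\circledast}{(U)^\circledast}$ then follows by applying $\it nil$ to the empty $\Psi$ obtained from the theorem.
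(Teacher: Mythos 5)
Your proposal follows essentially the same route as the paper: induction on the focused derivation, the generalized sequent $\seq{\Gamma; \Psi}{Q}$ with its $\mathit{cons}$/$\mathit{nil}$ rules to track the inversion context, suspension-normality to make ${\it id}^+$/${\it id}^-$ line up with $\mathit{init}$, and direct appeals to the corresponding unfocused rules in the logical cases. The right-rule cases and the shift/focus/$\eta$ cases are handled exactly as you describe.

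The one place where your plan, as literally stated, would not go through is the trio of positive left rules $\bot_L$, $\vee_L$, and $\wedge^+_L$. Take $\wedge^+_L$: after inverting $\mathit{cons}$ twice on the IH you hold $\seq{(\Gamma)^\circledast, (A^+)^\bullet, (B^+)^\bullet; (\Omega)^\bullet}{(U)^\circledast}$ with $(\Omega)^\bullet$ still possibly nonempty, and you need $\seq{(\Gamma)^\circledast, (A^+)^\bullet \wedge (B^+)^\bullet; (\Omega)^\bullet}{(U)^\circledast}$. You cannot apply $\wedge_{L1}$ and $\wedge_{L2}$ here, because those are rules of the plain sequent $\seq{\Gamma}{Q}$, not of the generalized judgment $\seq{\Gamma;\Psi}{Q}$; the remaining $\Psi$ sits in the way. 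What is needed is a separate lemma, proved by a \emph{secondary induction} on $\Psi$ (equivalently, on the $\mathit{cons}$/$\mathit{nil}$ derivation), that pushes the replacement of $P_1, P_2$ by $P_1 \wedge P_2$ all the way down to the $\mathit{nil}$ base case, where your weakening-plus-$\wedge_{L1}$-plus-$\wedge_{L2}$ step is exactly right. The paper states and proves this lemma explicitly (and two analogues for $\bot_L$ and $\vee_L$), noting that these are the only points in the whole development that resemble the invertibility lemmas of other focalization proofs; your phrase ``structural shuffling'' buries the only nontrivial induction in the proof. With that lemma added, your argument matches the paper's.
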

We can also state Theorem~\ref{thm:soundness} using the three-sequent
view of our logic. This statement of the theorem has three parts:
\begin{enumerate}
\item If $\rfoc{\Gamma}{A^+}$,
      then $\seq{(\Gamma)^\circledast; \cdot}{(A^+)^\bullet}$,
\item If $\ifoc{\Gamma}{\Omega}{U}$,
      then $\seq{(\Gamma)^\circledast; (\Omega)^\bullet}{(U)^\circledast}$, and
\item If $\lfoc{\Gamma}{A^-}{U}$, 
      then $\seq{(\Gamma)^\circledast; (A^-)^\bullet}{(U)^\circledast}$.
\end{enumerate}

\begin{proof}
By induction and case analysis on the given derivation;
$\mathcal D :: \seq{\Gamma}{P}$ denotes that $\mathcal D$ is a
derivation of $\seq{\Gamma}{P}$. Twenty-one of the twenty-four cases
are blindingly straightforward, such as this one:

\begin{description}
\item[Case]
$\mathcal D = \infer[\wedge^+_R]
{\rfoc{\Gamma}{A^+ \wedge B^+}}
{\deduce{\rfoc{\Gamma}{A^+}}{\mathcal D_1} 
 & 
 \deduce{\rfoc{\Gamma}{B^+}}{\mathcal D_2}}$

\begin{tabbing}
\qquad \= $\mathcal F_2'$ \= \kill
\>
$\mathcal E_1$ \> :: $\seq{(\Gamma)^\circledast; \cdot}{(A^+)^\bullet}$
 \` by the i.h.~(part 1) on $\mathcal D_1$
\\
\>
$\mathcal E_1'$ \> :: $\seq{(\Gamma)^\circledast}{(A^+)^\bullet}$
 \` by inversion on $\mathcal E_1$
\\
\>
$\mathcal E_2$ \> :: $\seq{(\Gamma)^\circledast; \cdot}{(B^+)^\bullet}$
 \` by the i.h.~(part 1) on $\mathcal D_2$
\\
\>
$\mathcal E_2'$ \> :: $\seq{(\Gamma)^\circledast}{(B^+)^\bullet}$
 \` by inversion on $\mathcal E_2$
\\
\>
$\mathcal E$ \> :: $\seq{(\Gamma)^\circledast}{(A^+)^\bullet \wedge (B^+)^\bullet}$
 \` by rule $\wedge_R$ on $\mathcal E_1$ and $\mathcal E_2$
\\
\>
$\mathcal E$ \> :: $\seq{(\Gamma)^\circledast}{(A^+ \wedge^+ B^+)^\bullet}$
 \` $(A^+  \wedge^+ B^+)^\bullet = (A^+)^\bullet \wedge (B^+)^\bullet$
\\
\>
$\mathcal E'$ \> :: $\seq{(\Gamma)^\circledast; \cdot}{(A^+ \wedge^+ B^+)^\bullet}$
 \` by rule ${\it nil}$ on $\mathcal E$.
\end{tabbing}
\end{description}

\noindent
For three cases corresponding to the rules $\bot_L$, $\vee_L$, and
$\wedge^+_L$, a secondary induction is needed to show the
admissibility, in the unfocused sequent calculus, of left rules that
have a context $\Psi$.

\begin{description}
\item[Case]
$\mathcal D = \infer[\wedge^+_L]
{\ifoc{\Gamma}{A^+ \wedge B^+, \Omega}{U}}
{\deduce{\ifoc{\Gamma}{A^+, B^+, \Omega}{U}}{\mathcal D_1}}$
\begin{tabbing}
\qquad \= $\mathcal F_2'$ \= \kill
\>
$\mathcal E_1$ \> :: $\seq{(\Gamma)^\circledast; (A^+, B^+, \Omega)^\bullet}{(U)^\circledast}$
 \` by i.h.~(part 2) on $\mathcal D_1$
\\
\>
$\mathcal E_1$ \> :: $\seq{(\Gamma)^\circledast; (A^+)^\bullet, (B^+)^\bullet, (\Omega)^\bullet}{(U)^\circledast}$
 \` $(A^+, B^+, \Omega)^\bullet = (A^+)^\bullet, (B^+)^\bullet, (\Omega)^\bullet$
\\
\>
$\mathcal E_1'$ \> :: $\seq{(\Gamma)^\circledast, (A^+)^\bullet; (B^+)^\bullet, (\Omega)^\bullet}{(U)^\circledast}$
 \` by inversion on $\mathcal E_1$ 
\\
\>
$\mathcal E_1''$ \> :: $\seq{(\Gamma)^\circledast, (A^+)^\bullet, (B^+)^\bullet; (\Omega)^\bullet}{(U)^\circledast}$
 \` by inversion on $\mathcal E_1'$ 
\\
\>
$\mathcal E$ \> :: $\seq{(\Gamma)^\circledast, (A^+)^\bullet \wedge (B^+)^\bullet; (\Omega)^\bullet}{(U)^\circledast}$
 \` by lemma on $\mathcal E''_1$
\\
\>
$\mathcal E$ \> :: $\seq{(\Gamma)^\circledast, (A^+ \wedge^+ B^+)^\bullet; (\Omega)^\bullet}{(U)^\circledast}$
 \` $(A^+  \wedge^+ B^+)^\bullet = (A^+)^\bullet \wedge (B^+)^\bullet$
\\
\>
$\mathcal E'$ \> :: $\seq{(\Gamma)^\circledast; (A^+ \wedge^+ B^+)^\bullet, (\Omega)^\bullet}{(U)^\circledast}$
 \` by rule $\it cons$ on $\mathcal E'$
\\
\>
$\mathcal E'$ \> :: $\seq{(\Gamma)^\circledast; (A^+ \wedge^+ B^+, \Omega)^\bullet}{(U)^\circledast}$
 \` $(A^+ \wedge^+ B^+, \Omega)^\bullet = (A^+ \wedge^+ B^+)^\bullet, (\Omega)^\bullet$
\end{tabbing}
\end{description}
The necessary lemma is that $\seq{\Gamma, P_1, P_2; \Psi}{Q}$,
implies $\seq{\Gamma, P_1 \wedge P_2; \Psi}{Q}$.
We proceed by induction on $\Psi$ and by case analysis on the
structure of the given derivation.
\begin{description}
\item[Subcase] $\mathcal D = 
\infer[\it cons]
{\seq{\Gamma, P_1, P_2; P, \Psi}{Q}}
{\deduce{\seq{\Gamma, P_1, P_2, P; \Psi}{Q}}{\mathcal D_1}}$
\begin{tabbing}
\qquad \= $\mathcal F_2'$ \= \kill
\>
$\mathcal D_1'$ \> :: $\seq{\Gamma, P, P_1, P_2; \Psi}{Q}$
 \` by exchange on $\mathcal D_1$\\
\>
$\mathcal E_1$ \> :: $\seq{\Gamma, P, P_1 \wedge P_2; \Psi}{Q}$
 \` by i.h.~on $\mathcal D_1'$\\
\>
$\mathcal E_1'$ \> :: $\seq{\Gamma, P_1 \wedge P_2, P; \Psi}{Q}$
 \` by exchange on $\mathcal E_1$\\
\>
$\mathcal E$ \> :: $\seq{\Gamma, P_1 \wedge P_2; P, \Psi}{Q}$
 \` by rule $\it cons$ on $\mathcal E_1'$
\end{tabbing}
\item[Subcase] $\mathcal D = 
\infer[\it nil]
{\seq{\Gamma, P_1, P_2; \cdot}{Q}}
{\deduce{\seq{\Gamma, P_1, P_2}{Q}}{\mathcal D_1}}$
\begin{tabbing}
\qquad \= $\mathcal F_2'$ \= \kill
\>
$\mathcal D_1'$ \> :: $\seq{\Gamma, P_1 \wedge P_2, P_1, P_2}{Q}$
 \` by weakening on $\mathcal D_1$\\
\>
$\mathcal E_1$ \> :: $\seq{\Gamma, P_1 \wedge P_2, P_1}{Q}$
 \` by rule $\wedge_{L2}$ on $\mathcal D_1'$\\
\>
$\mathcal E_1'$ \> :: $\seq{\Gamma, P_1 \wedge P_2}{Q}$
 \` by rule $\wedge_{L1}$ on $\mathcal E_1$\\
\>
$\mathcal E$ \> :: $\seq{\Gamma, P_1 \wedge P_2; \cdot}{Q}$
 \` by rule $\it nil$ on $\mathcal E_1'$
\end{tabbing}
\end{description}
The 22 other cases of the main theorem and the 2 other lemmas are
similar.  This theorem is named {\tt sound} in the accompanying Twelf
development.
\end{proof}

The lemma for $\wedge^+_L$ and the two similar lemmas for $\bot_L$ and
$\vee_L$ are as close as we will get to the tedious invertibility
lemmas encountered by other proofs of the focalization property.
Because of the way we have structured our system, each lemma only
requires induction and case analysis over the definition of
$\seq{\Gamma; \Psi}{P}$, which is defined by two rules, ${\it cons}$
and ${\it nil}$. Therefore, our proof remains linear in the number of
connectives and rules, rather than quadratic as in other approaches.

\section{Cut admissibility}
\label{sec:cut}

The statement of cut admissibility in an unpolarized logic is that
$\seq{\Gamma}{P}$ and $\seq{\Gamma, P}{Q}$ imply $\seq{\Gamma}{Q}$. In
polarized logic, we have positive and negative propositions, so the
cut admissibility theorem must, at minimum, have two parts: a negative
cut, that $\ifoc{\Gamma}{\cdot}{A^-}$ and $\ifoc{\Gamma,
  A^-}{\cdot}{U}$ imply $\ifoc{\Gamma}{\cdot}{U}$, and a positive cut,
that $\ifoc{\Gamma}{\cdot}{A^+}$ and $\ifoc{\Gamma}{A^+}{U}$ imply
$\ifoc{\Gamma}{\cdot}{U}$. The actual proof of cut admissibility will
require further generalization, but these statements are corollaries.
As in the statement of de-focalization, we state cut admissibility
using the one-sequent view of sequents in order to cut down on the
number of individual statements that we need to consider (4 parts
instead of the 7 we would need otherwise).

\begin{theorem}[Cut admissibility]
If $\Gamma$ and $U$ are suspension-normal, then\label{thm:cut}

\begin{enumerate}
\item If $\rfoc{\Gamma}{A^+}$ and $\ifoc{\Gamma}{A^+, \Omega}{U}$, then
      $\ifoc{\Gamma}{\Omega}{U}$,
\item If $\ifoc{\Gamma}{\cdot}{A^-}$, $\lfoc{\Gamma}{A^-}{U}$, 
      and $\stable{U}$, then
      $\ifoc{\Gamma}{\cdot}{U}$,
\item If $\ifoc{\Gamma}{\cdot}{A^-}$ and $\efoc{\Gamma, A^-}{L}{U}$, then 
      $\efoc{\Gamma}{L}{U}$, and
\item If $\efoc{\Gamma}{L}{A^+}$, $\ifoc{\Gamma}{A^+}{U}$,
      and $\stable{U}$,
      then $\efoc{\Gamma}{L}{U}$.
\end{enumerate}
\end{theorem}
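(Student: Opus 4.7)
The plan is to prove the four parts simultaneously by a lexicographic induction whose primary measure is the cut formula $A$ (either $A^+$ or $A^-$, depending on the part) and whose secondary measure is the height of one of the input derivations. Parts 1 and 2 are the \emph{principal} cuts, which decompose the cut formula, while parts 3 and 4 are \emph{substitution} cuts that propagate a hypothesis through an entire derivation. I expect all four parts to appeal to each other, with the principal cuts appealing to substitution cuts at a strictly smaller cut formula, and the substitution cuts appealing to the corresponding principal cut at the same cut formula. This matches the standard Pfenning-style structural cut elimination pattern, adapted to a focused calculus with shifts.

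For the substitution cuts (parts 3 and 4), I would induct on the height of the derivation of the sequent containing the cut hypothesis: the derivation $\efoc{\Gamma,A^-}{L}{U}$ in part 3, and the derivation $\ifoc{\Gamma}{A^+}{U}$ in part 4. In almost every case, the last rule of that derivation does not interact with the cut hypothesis, so we apply the induction hypothesis to the premises and reapply the rule; weakening of the other derivation into extended contexts (from ${\downarrow}_L$, $\eta^+$, $\supset_R$, etc.) is used freely. The only interesting subcase of part 3 is when the last rule is ${\it foc}_L$ applied to $A^-$ itself, at which point we are handed a left-focused derivation on $A^-$ and can invoke part 2 at the same cut formula. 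Dually, the only interesting subcase of part 4 is when the last rule is ${\it foc}_R$ with the cut formula $A^+$ in right focus, at which point we invoke part 1.

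For the principal cuts (parts 1 and 2), I would do case analysis on the last rule of the first derivation (the derivation ``producing'' the cut formula), which determines the outermost connective of the cut formula; the second derivation must then invert that same connective. Each principal case reduces the cut formula to a strict subformula: a cut on $A^+ \wedge^+ B^+$ becomes two cuts on $A^+$ and $B^+$ using part 1; a cut on $A^+ \vee B^+$ becomes one smaller cut in the chosen branch; a cut on $A^+ \supset B^-$ becomes a part 1 cut on $A^+$ followed by a part 2 cut on $B^-$; a cut on ${\downarrow}A^-$ triggers a part 3 substitution cut at the smaller formula $A^-$; and dually for negative connectives, where a cut on ${\uparrow}A^+$ fires part 4 on $A^+$. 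The ${\it id}^+$ and ${\it id}^-$ cases are trivial: one of the two derivations simply \emph{is} the result.

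The main obstacle is to get the termination measure exactly right so that the mutual dependencies really do decrease. The critical observation is that cuts through a shift (${\downarrow}$ or ${\uparrow}$) convert a principal cut on a formula into a substitution cut on a strictly smaller formula, so the cut formula decreases across every such appeal; substitution cuts appeal to principal cuts only at the \emph{same} cut formula but from a strictly smaller derivation measure, and substitution cuts recurse on themselves with strictly smaller derivations. A second subtlety is that part 1 and part 4 have overlapping scope (both propagate a value focused on $A^+$), as do part 2 and part 3; I would treat part 1 as the principal case that is invoked by part 4 only once ${\it foc}_R$ is reached, and likewise for parts 2 and 3, so that no two parts ever appeal to each other at the same measure in the same direction. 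Since the inversion context $\Omega$ in part 1 is uniquely decomposed from the left and does not grow, and since focused sequents admit weakening silently, the bookkeeping should reduce to the standard hereditary-substitution pattern on proof terms described in Section~\ref{sec:proofterms}.
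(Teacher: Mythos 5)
Your overall architecture is the same as the paper's: four mutually recursive parts, with parts 1--2 as principal cuts that decrease the cut formula, parts 3--4 as commutative substitutions that walk one derivation looking for the focus rule on the cut formula, shifts mediating between the two groups, and a lexicographic measure headed by the cut formula. Your case analysis for the principal cuts (including ${\downarrow}A^-$ firing part 3 and ${\uparrow}A^+$ firing part 4 at a strictly smaller formula) matches the paper's hereditary-substitution presentation exactly.

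Two points need repair. First, you name the wrong derivation as the induction target for part 4. You say you would induct on $\ifoc{\Gamma}{A^+}{U}$, but that sequent has $A^+$ at the head of the inversion context, so its last rule is forced to decompose $A^+$ itself --- there is nothing to commute past, and ${\it foc}_R$ can never be its last rule. The derivation that must be broken apart in part 4 is the \emph{first} one, $\efoc{\Gamma}{L}{A^+}$, searching for the occurrences of ${\it foc}_R$ that produce $A^+$ in right focus; this is the ``leftist substitution'' the paper singles out as unusual precisely because it recurses on the first rather than the second premise. Your own description of the interesting subcase is consistent with this, so the fix is local, but as stated your measure does not support the case analysis you give. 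Second, your claim that substitution cuts appeal to principal cuts ``from a strictly smaller derivation measure'' is false: when part 3 reaches $\lft{x}{S}$ it first substitutes recursively into $S$, producing a spine that may be larger than the derivation being inducted on, and then hands that spine together with the unchanged term $M$ to part 2 at the same cut formula. Termination there is not bought by a smaller derivation but by an explicit ordering on the parts (2 before 3, 1 before 4) inserted between the cut formula and the derivation size in the lexicographic measure --- the paper's nonstandard ``part size'' component. Your closing remark about no two parts appealing to each other ``at the same measure in the same direction'' gestures at this, but it must become an actual component of the induction order for the proof to go through.
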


\noindent
Beyond the additional cases needed to deal with shifts, the proof of
focused cut admissibility mirrors structural cut admissibility proofs
for unfocused sequent calculi.  In fact, the organization strategy
imposed by this four-part statement of cut admissibility makes
explicit the informal organization strategy of principal, 
left commutative, and right commutative cuts that Pfenning used to
present the many cases of structural cut admissibility proofs
\cite{pfenning00structural}.  

Before discussing the proof of Theorem~\ref{thm:cut}, we will show how
we write the four parts of cut admissibility at the level of proof
terms. By Curry-Howard, cut admissibility corresponds to a reduction
operation on proof terms that was named {\it hereditary substitution}
by Watkins et al.~\shortcite{watkins02concurrent}.

{\it Principal cuts} (parts 1 and 2) are cases where the {\it
  principal formula} (that is, $A^+$ or $A^-$) is the proposition
being decomposed in the last rule of both given derivations. In a
focused sequent calculus, this naturally happens when the principal
formula is in focus in one sequent and in inversion in the other. We
will refer to the operation of principal cuts on proof terms as {\it
  principal substitution}:
\[
\infer-[{\it cut}^+]
{\ifoct{\Gamma}{\Omega}{(V \bullet N)^{A^+}}{U}}
{\rfoct{\Gamma}{V}{A^+}
 &
 \ifoct{\Gamma}{A^+,\Omega}{N}{U}}
\]
\[
\infer-[{\it cut}^-]
{\ifoct{\Gamma}{\cdot}{(M \bullet S)^{A^-}}{U}}
{\ifoct{\Gamma}{\cdot}{M}{A^-}
 &
 \lfoct{\Gamma}{S}{A^-}{U}
 &
 \stable{U}
 }
\]

{\it Right commutative cuts} (part 3) deal with all cases where the
second given derivation decomposes a proposition other than the
principal formula. The action on proof terms is {\it rightist
  substitution}:
\[
\infer-[{\it rsubst}]
{\efoct{\Gamma}{L}{\llbracket M/x \rrbracket^{A^-} E}{U}}
{\ifoct{\Gamma}{\cdot}{M}{A^-}
 &
 \efoct{\Gamma, x{:}A^-}{L}{E}{U}}
\]

{\it Left commutative cuts} (part 4) deal with all cases where the
first given derivation ends in a left rule. The action on proof terms
is a {\it leftist substitution}: 
\[
\infer-[{\it lsubst}]
{\efoct{\Gamma}{L}{\llangle E \rrangle^{A^+} N}{U}}
{\efoct{\Gamma}{L}{E}{A^+}
 &
 \ifoct{\Gamma}{A^+}{N}{U}
 &
 \stable{U}
 }
\]

\begin{proof}
The proof of cut admissibility is by lexicographic induction. In each
invocation of the induction hypothesis, either
\begin{itemize}
\item the principal formula $A^+$ or $A^-$ 
gets smaller, or else it stays the same 
and
\item the ``part size'' (as in parts 1-4) decreases, or else both the
  principal formula and part size stay the same and either
\begin{itemize}
\item we are in part 3 and the second given derivation gets smaller,
  or
\item we are in part 4 and the first given derivation gets smaller.
\end{itemize}
\end{itemize}
This is actually a refinement of the standard structural induction
metric presented by Pfenning \shortcite{pfenning00structural}, which
is itself a structural-induction-flavored reinterpretation of the
metric used by Gentzen \shortcite{gentzen35untersuchungen} that forms
the basis of most cut elimination proofs. The
extra lexicographic ordering on ``part size'' is nonstandard, but is
needed here to justify the appeals to principal substitution
from rightist and leftist substitution.  When we look at the
computational content of cut admissibility, we can see that rightist
substitutions only break apart the second given derivation and that
leftist substitutions only break apart the first derivation, and that
these substitutions do not call one another directly. Unlike the usual
induction argument for cut admissibility, there is no commitment made
to the first derivation staying the same or getting smaller
while we are performing rightist substitution; the same is true for
the second derivation in leftist substitution. While it is beyond the
scope of this article, this alternate induction metric is helpful
when formalizing structural focalization in Agda.

Due to the conciseness (certainly) and clarity (optimistically) of
such a presentation, we present the cases of this proof using only
proof terms. This critically relies on the fact that we
understand all of our values, terms, and spines to be intrinsically
typed (and therefore in 1-to-1 correspondence with focused sequent
calculus derivations).

\subsubsection*{Principal substitution} 
This is where the action is; it's where both terms are decomposed
simultaneously in concert as the type gets smaller. Rightist and
leftist substitutions, in comparison, are just looking around for
places where principal substitution can happen.

\bigskip
\fbox{$(V \bullet N)^{A^+} = N'$} (part 1)
\begin{quote}
$(z \bullet \etalt{z'}{N})^{p^+} = [z/z']N$\\
$(\dsrt{M} \bullet x.N)^{{\downarrow}A^-}
   = \llbracket M/x \rrbracket^{A^-} N $\\
$(\mathsf{inl}\,V \bullet [N_1 , N_2 ])^{A^+ \vee B^+}
   = (V \bullet N_1)^{A^+}$\\
$(\mathsf{inr}\,V \bullet [N_1 , N_2 ])^{A^+ \vee B^+} 
   = (V \bullet N_2)^{B^+}$\\
$(\langle\rangle^+ \bullet \langle\rangle.N)^{\top^+} = N$\\
$(\langle V_1, V_2\rangle^+ \bullet {\times}N)^{A^+ \wedge^+ B^+} 
   = (V_2 \bullet (V_1 \bullet N)^{A^+})^{B^+}$

\medskip
In the case where $A^+ = p^+$, we invoke focal substitution $[z/z']N$
to do variable-for-variable substitution. This can also be seen as 
a use of contraction.
\end{quote}

\medskip
\fbox{$(M \bullet S)^{A^-} = N'$} (part 2)
\begin{quote}
$(\etart{M} \bullet \textsc{nil})^{p^-} = M$\\
$(\usrt{M} \bullet \uslt{N})^{{\uparrow}A^+} = \llangle M \rrangle^{A^+} N$\\
$(\lambda N \bullet V; S)^{A^+ \supset B^-} 
   = ((V \bullet N)^{A^+} \bullet S)^{B^-}$\\
$(\langle M_1, M_2 \rangle^- \bullet \pi_1; S)^{A^- \wedge^- B^-}
   = (M_1 \bullet S)^{A^-}$\\
$(\langle M_1, M_2 \rangle^- \bullet \pi_2; S)^{A^- \wedge^- B^-}
   = (M_2 \bullet S)^{B^-}$
\end{quote}

\subsubsection*{Rightist substitution} 
This is closest to the traditional form of substitution that we're
used to from natural deduction: we churn through the second term to
find all the places where $x$, the variable we're substituting $M$
for, occurs (if, indeed, any exist). When we find an occurrence of
this distinguished variable, which can only happen when the expression
that we're substituting into is a term that has decided to focus on
$x$, we call to negative principal substitutions (part 2).  In
traditional substitution we'd just plop $M$ down at the places where
$x$ occurred, but to do that in this setting would introduce a cut!

\bigskip
\fbox{$\llbracket M/x \rrbracket^{A^-} V = V'$} (part 3, $E = V$)
\begin{quote}
$\llbracket M/x \rrbracket^{A^-} z = z$\\
$\llbracket M/x \rrbracket^{A^-} \dsrt{N} 
   = \dsrt{(\llbracket M/x \rrbracket^{A^-} N)}$\\
$\llbracket M/x \rrbracket^{A^-} \mathsf{inl}\,V 
   = \mathsf{inl}\,(\llbracket M/x \rrbracket^{A^-} V)$\\
$\llbracket M/x \rrbracket^{A^-} \mathsf{inr}\,V 
   = \mathsf{inr}\,(\llbracket M/x \rrbracket^{A^-} V)$\\
$\llbracket M/x \rrbracket^{A^-} \langle\rangle^+ = \langle\rangle^+$\\
$\llbracket M/x \rrbracket^{A^-} \langle V_1, V_2 \rangle^+ 
   = \langle (\llbracket M/x \rrbracket^{A^-}V_1), 
             (\llbracket M/x \rrbracket^{A^-}V_2) \rangle^+$
\end{quote}

\medskip
\fbox{$\llbracket M/x \rrbracket^{A^-} N = N'$} (part 3, $E=N$)
\begin{quote}
$\llbracket M/x \rrbracket^{A^-} \rft{V}
   = \rft{(\llbracket M/x \rrbracket^{A^-} V)}$\\
$\llbracket M/x \rrbracket^{A^-} (\lft{x}{S})
   = (M \bullet \llbracket M/x \rrbracket^{A^-} S)^{A^-}$\\
$\llbracket M/x \rrbracket^{A^-} (\lft{x'}{S}) 
   = \lft{x'}{(\llbracket M/x \rrbracket^{A^-} S)}$
   \qquad (if $x \neq x'$)\\
$\llbracket M/x \rrbracket^{A^-} \etalt{z}{N} 
   = \etalt{z}{(\llbracket M/x \rrbracket^{A^-} N)}$\\
$\llbracket M/x \rrbracket^{A^-} \dslt{x'}{N} 
   = \dslt{x'}{(\llbracket M/x \rrbracket^{A^-} N)}$\\
$\llbracket M/x \rrbracket^{A^-} \mathsf{abort}
   = \mathsf{abort}$\\
$\llbracket M/x \rrbracket^{A^-} [N_1, N_2] 
   = [(\llbracket M/x \rrbracket^{A^-} N_1),
      (\llbracket M/x \rrbracket^{A^-} N_2)]$\\
$\llbracket M/x \rrbracket^{A^-} \langle\rangle.N
   = \langle\rangle.(\llbracket M/x \rrbracket^{A^-} N)$\\
$\llbracket M/x \rrbracket^{A^-} {\times}N 
   = {\times}(\llbracket M/x \rrbracket^{A^-} N)$\\
$\llbracket M/x \rrbracket^{A^-} \etart{N} 
   = \etart{\llbracket M/x \rrbracket N}$\\
$\llbracket M/x \rrbracket^{A^-} \usrt{N} 
   = \usrt{\llbracket M/x \rrbracket^{A^-} N}$\\
$\llbracket M/x \rrbracket^{A^-} \lambda N 
   = \lambda (\llbracket M/x \rrbracket^{A^-} N)$\\
$\llbracket M/x \rrbracket^{A^-} \langle\rangle^- = \langle\rangle^-$\\
$\llbracket M/x \rrbracket^{A^-} \langle N_1, N_2 \rangle^- 
   = \langle (\llbracket M/x \rrbracket^{A^-} N_1),
             (\llbracket M/x \rrbracket^{A^-} N_2) \rangle^-$

\medskip
In the cases for $\eta^+$ (proof term $\etalt{z}{N}$) and ${\downarrow}_L$
(proof term $\dslt{x'}{N}$), the bound variables
$z$ and $x'$ can always be 
$\alpha$-converted to be different from both $x$ and any variables free in $M$.
\end{quote}

\medskip
\fbox{$\llbracket M/x \rrbracket^{A^-} S = S'$} (part 3, $E = S$)

\begin{quote}
$\llbracket M/x \rrbracket^{A^-} \textsc{nil} = \textsc{nil}$\\ 
$\llbracket M/x \rrbracket^{A^-} \uslt{N} 
   = \uslt{(\llbracket M/x \rrbracket^{A^-} N)}$\\
$\llbracket M/x \rrbracket^{A^-} V; S 
   = (\llbracket M/x \rrbracket^{A^-} V); 
     (\llbracket M/x \rrbracket^{A^-} S)$\\
$\llbracket M/x \rrbracket^{A^-} \pi_1; S 
   = \pi_2; (\llbracket M/x \rrbracket^{A^-} S)$\\
$\llbracket M/x \rrbracket^{A^-} \pi_2; S 
   = \pi_1; (\llbracket M/x \rrbracket^{A^-} S)$
\end{quote}

\subsubsection*{Leftist substitution} 
This is so named because it, rather unusually, breaks apart the first
(and not the second) derivation. This is natural from the perspective
of cut elimination: the second term $N$ has an inversion it must do on
the left, so just like we searched in rightist substitution for any
(potential) use of the ${\it foc}_L$ rule on $x$ in the second term, we
search in leftist substitution for uses of ${\it foc}_R$ to derive
$A^+$ in the first term.

\bigskip
\fbox{$\llangle M \rrangle^{A^+} N = M'$} (part 4, $E=M$)
\begin{quote}
$\llangle \rft{V} \rrangle^{A^+} N 
   = (V \bullet N)^{A^+}$\\
$\llangle \lft{x}{S} \rrangle^{A^+} N  
   = \lft{x}{(\llangle S \rrangle^{A^+} N)}$\\
$\llangle \etalt{z}{M} \rrangle^{A^+} N  
   = \etalt{z}{(\llangle M \rrangle^{A^+} N)}$\\
$\llangle \dslt{x}{M} \rrangle^{A^+} N  
   = \dslt{x}{(\llangle M \rrangle^{A^+} N)}$\\
$\llangle \mathsf{abort} \rrangle^{A^+} N = \mathsf{abort}$\\
$\llangle [M_1, M_2] \rrangle^{A^+} N  
   = [(\llangle M_1 \rrangle^{A^+} N), 
      (\llangle M_2 \rrangle^{A^+} N)]$\\
$\llangle \langle\rangle.M \rrangle^{A^+} N  
   = \langle\rangle.(\llangle M \rrangle^{A^+} N)$\\
$\llangle {\times}M \rrangle^{A^+} N  
   = {\times}(\llangle M \rrangle^{A^+} N)$
\end{quote}

\medskip
\fbox{$\llangle S \rrangle^{A^+} N = S'$} (part 4, $E = S$)
\begin{quote}
$\llangle \uslt{M} \rrangle N
   = \uslt{(\llangle M \rrangle N)}$\\
$\llangle V; S \rrangle N
   = V; (\llangle S \rrangle N)$\\
$\llangle \pi_1; S \rrangle N
   = \pi_1; (\llangle S \rrangle N)$\\
$\llangle \pi_2; S \rrangle N
   = \pi_2; (\llangle S \rrangle N)$
\end{quote}

\noindent
This completes the proof. The four parts of this theorem are named
{\tt cut+}, {\tt cut-}, {\tt rsubst}, and {\tt lsubst} (respectively)
in the accompanying Twelf development.
\end{proof}

\section{Identity expansion}
\label{sec:expansion}

A significant novelty of our presentation relative to existing work is
our presentation of the identity expansion theorem; it is adapted from
the identity expansion theorem given for {\it weak focusing}
\cite{simmons11weak}, a less-restricted focusing calculus that does
not require invertible rules to be applied eagerly. The familiar
identity property for an unfocused sequent calculus states that, for
all propositions $A$, there is a derivation $\seq{\Gamma,
  A}{A}$. Identity in an unfocused sequent calculus can generally be
established by structural induction on the proposition $A$.

As with cut admissibility, there are two analogous identity properties
for the focused sequent calculus. First, for all positive propositions
$A^+$ there is a derivation of $\ifoc{\Gamma}{A^+}{A^+}$. Second, for
all negative propositions $A^-$ there is a derivation $\ifoc{\Gamma,
  A^-}{\cdot}{A^-}$. As an exercise, you should convince yourself that
this property cannot be established directly by structural induction
on $A^+$ or $A^-$. It doesn't work, in other words, to generalize the
${\it init}$ rule from the unfocused sequent calculus
(Figure~\ref{fig:unfoc}) to get an identity principle for the focused
sequent calculus. Instead, it is the suggestively named $\eta^+$ and
$\eta^-$ rules that generalize to admissible identity expansion
principles:
\[
\infer-[{\it expand}^+]
{\ifoc{\Gamma}{A^+, \Omega}{U}}
{\ifoc{\Gamma, \susp{A^+}}{\Omega}{U}}
\qquad
\infer-[{\it expand}^-]
{\ifoc{\Gamma}{\cdot}{A^-}}
{\ifoc{\Gamma}{\cdot}{\susp{A^-}}}
\]
When we introduced the $\eta^+$ and $\eta^-$ rules, we said they
reflected the idea that inversion should not suspend itself until
reaching an atomic proposition. The existence of these admissible
rules relaxes this requirement: we can optionally suspend inversion
before the pattern matching process is exhausted. The non-atomic
suspended propositions that appear when we suspend early appear to
have a connection to the {\it complex values} in call-by-push-value
\cite{levy04call}.

The premises of both of these rules are definitely not
suspension-normal. Unlike cut admissibility, identity expansion is not
at all restricted to suspension-normal sequents: non-atomic suspended
propositions and focal substitution play an important role. 

We associate positive identity expansion with the proof term $\eta^{A+}(z.N)$
and negative identity expansion with the proof term $\eta^{A^-}(N)$, 
allowing us to annotate the admissible rules above:
\[
\infer-[{\it expand}^+]
{\ifoct{\Gamma}{A^+, \Omega}{\eta^{A^+}(z.N)}{U}}
{\ifoct{\Gamma, z{:}\susp{A^+}}{\Omega}{N}{U}}
\qquad
\infer-[{\it expand}^-]
{\ifoct{\Gamma}{\cdot}{\eta^{A^-}(N)}{A^-}}
{\ifoct{\Gamma}{\cdot}{N}{\susp{A^-}}}
\]
Given identity expansion, the positive identity principle that
$\ifoc{\Gamma}{A^+}{A^+}$ holds for all $A^+$ is provable using
positive identity expansion.
\[
\infer-[{\it expand}^+]
{\ifoct{\Gamma}{A^+}{\eta^{A^+}(z.\rft{z})}{A^+}}
{\infer[{\it foc}_R]
 {\ifoct{\Gamma, z{:}\susp{A^+}}{\cdot}{\rft{z}}{A^+}}
 {\infer[{\it id}^+]
  {\rfoct{\Gamma, z{:}\susp{A^+}}{{z}}{A^+}}
  {}}}
\]
 The negative identity principle that $\ifoc{\Gamma, A^-}{\cdot}{A^-}$
 holds for all $A^-$ is similarly a corollary of negative identity
 expansion.
\[
\infer-[{\it expand}^-]
{\ifoct{\Gamma, x{:}A^-}{\cdot}{\eta^{A^-}(\lft{x}{\textsc{nil}})}{A^-}}
{\infer[{\it foc}_L]
 {\ifoct{\Gamma, x{:}A^-}{\cdot}{\lft{x}{\textsc{nil}}}{\susp{A^-}}}
 {\infer[{\it id}^-]
  {\lfoct{\Gamma, x{:}A^-}{\textsc{nil}}{A^-}{\susp{A^-}}}
  {}}}
\]

\begin{theorem}[Identity expansion]\label{thm:identity}~
\begin{enumerate}
\item For all $A^+$, if $\ifoc{\Gamma, \susp{A^+}}{\Omega}{U}$, 
then $\ifoc{\Gamma}{A^+, \Omega}{U}$. 
\item For all $A^-$, if $\ifoc{\Gamma}{\cdot}{\susp{A^-}}$,
then $\ifoc{\Gamma}{\cdot}{A^-}$. 
\end{enumerate}
\end{theorem}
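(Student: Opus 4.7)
The plan is to prove both statements simultaneously by mutual structural induction on the proposition $A^+$ for part 1 and $A^-$ for part 2, with the recursion crossing between the two inductions at the shift constructors. In each case, the top rule of the derivation I build is the invertible rule for the outermost connective of the proposition being expanded: a left rule for positive $A^+$ (since positive connectives have invertible left rules) and a right rule for negative $A^-$. The resulting subgoals are then reconnected to the given derivation using one of the focal substitution principles ${\it subst}^+$ or ${\it subst}^-$, with ${\it id}^+$ or ${\it id}^-$ supplying the focused piece.

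For a representative positive case, take $A^+ = A_1^+ \wedge^+ A_2^+$: I apply $\wedge^+_L$ to reduce the goal $\ifoc{\Gamma}{A_1^+ \wedge^+ A_2^+, \Omega}{U}$ to $\ifoc{\Gamma}{A_1^+, A_2^+, \Omega}{U}$, then invoke the inductive hypothesis on $A_1^+$ and then on $A_2^+$ to further reduce it to $\ifoc{\Gamma, \susp{A_1^+}, \susp{A_2^+}}{\Omega}{U}$. This last sequent follows from the (weakened) given derivation by ${\it subst}^+$ with the right-focused derivation $\rfoc{\Gamma, \susp{A_1^+}, \susp{A_2^+}}{A_1^+ \wedge^+ A_2^+}$ built by $\wedge^+_R$ from two uses of ${\it id}^+$. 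The remaining positive cases ($\bot$, $\vee$, $\top^+$, and $p^+$ via $\eta^+$) follow the same template, and the negative cases for $\top^-$, $\wedge^-$, $p^-$, and $\supset$ are dual, using $\wedge^-_{Li}$, $\supset_L$, and ${\it id}^-$ to construct the left-focused spine needed for ${\it subst}^-$.

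The cases that make the recursion genuinely mutual, and which I expect to be the most delicate, are the shift cases. For $A^+ = {\downarrow}A^-$, after applying ${\downarrow}_L$ the remaining goal is $\ifoc{\Gamma, A^-}{\Omega}{U}$, which I obtain by ${\it subst}^+$ on the (weakened) given derivation together with $\rfoc{\Gamma, A^-}{{\downarrow}A^-}$; that right-focus derivation is built by ${\downarrow}_R$ from $\ifoc{\Gamma, A^-}{\cdot}{A^-}$, which in turn comes from the \emph{negative} inductive hypothesis applied to $\ifoc{\Gamma, A^-}{\cdot}{\susp{A^-}}$ (derived in one step using ${\it foc}_L$ and ${\it id}^-$). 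Dually, for $A^- = {\uparrow}A^+$, I apply ${\uparrow}_R$ to reduce to $\ifoc{\Gamma}{\cdot}{A^+}$ and then invoke ${\it subst}^-$ on the given derivation together with $\lfoc{\Gamma}{{\uparrow}A^+}{A^+}$ built by ${\uparrow}_L$ from $\ifoc{\Gamma}{A^+}{A^+}$, which follows from the \emph{positive} inductive hypothesis on $A^+$ applied to ${\it foc}_R$ of ${\it id}^+$.

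The main obstacle is bookkeeping rather than insight: ensuring that each appeal to the inductive hypothesis is on a strictly smaller subformula (treating the shift constructor as a genuine unit of structural decrease, which is precisely why the polarized presentation with explicit shifts makes the proof go through by straightforward structural induction), that weakenings are applied to the correct sequents, and that the focal substitutions are invoked on sequents of the correct shape. The $\supset$ case is a useful sanity check, since it chains together every ingredient at once -- both inductive hypotheses, both focal substitutions, and both identity rules -- and once it goes through the remaining cases pattern-match onto the same skeleton.
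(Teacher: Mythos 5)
Your proposal is correct and follows essentially the same route as the paper's proof: mutual structural induction on $A^+$/$A^-$, in each case combining the invertible rule for the outermost connective with a focal substitution whose focused premise is built from ${\it id}^+$/${\it id}^-$ (this is exactly why the paper generalizes those rules to arbitrary suspended propositions), and crossing between the two parts at the shift cases. The only difference is presentational — you describe the construction goal-directed where the paper works forward and then records it as the $\eta^{A^+}(z.N)$/$\eta^{A^-}(N)$ proof-term transformations.
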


\begin{proof}
The proof is by induction and case analysis on the structure of the
proposition $A^+$ or $A^-$.  

We will present one case of part 1 and
one case of part 2 line-by-line, and then present all of the cases
using the language of proof terms.

\begin{description}
\item[Case (part 1)] $A^+ = A^+ \wedge^+ B^+$
\begin{tabbing}
\qquad \= $\mathcal F_2'$ \= \kill
\>
$\mathcal D$ \> :: $\ifoc{\Gamma, \susp{A^+ \wedge^+ B^+}}{\Omega}{U}$
  \` given \\
\>
$\mathcal D'$ \> :: $\ifoc{\Gamma, \susp{A^+}, \susp{B^+}, \susp{A^+ \wedge^+ B^+}}{\Omega}{U}$
  \` by weakening on $\mathcal D$\\
\>
$\mathcal E_1$ \> :: $\rfoc{\Gamma, \susp{A^+}, \susp{B^+}}{A^+}$
  \` by rule ${\it id}^+$\\
\>
$\mathcal E_2$ \> :: $\rfoc{\Gamma, \susp{A^+}, \susp{B^+}}{B^+}$
  \` by rule ${\it id}^+$\\
\>
$\mathcal E$ \> :: $\rfoc{\Gamma, \susp{A^+}, \susp{B^+}}{A^+ \wedge^+ B^+}$
  \` by rule $\wedge^+_R$ on $\mathcal E_1$ and $\mathcal E_2$\\
\>
$\mathcal F$ \> :: $\ifoc{\Gamma, \susp{A^+}, \susp{B^+}}{\Omega}{U}$
  \` by focal substitution on $\mathcal E$ and $\mathcal D'$\\
\>
$\mathcal F_1$ \> :: $\ifoc{\Gamma, \susp{A^+}}{B^+, \Omega}{U}$
  \` by i.h.~(part 1) on $B^+$ and $\mathcal F$\\
\>
$\mathcal F_2$ \> :: $\ifoc{\Gamma}{A^+, B^+, \Omega}{U}$
  \` by i.h.~(part 1) on $A^+$ and $\mathcal F_1$\\
\>
$\ifoc{\Gamma}{A^+ \wedge^+ B^+, \Omega}{U}$
  \` by rule $\wedge^+_L$ on $F_2$
\end{tabbing}

\item[Case (part 2)] $A^- = A^+ \supset^- B^-$
\begin{tabbing}
\qquad \= $\mathcal F_2'$ \= \kill
\>
$\mathcal D$ \> :: $\ifoc{\Gamma}{\cdot}{\susp{A^+ \supset B^-}}$
  \` given \\
\>
$\mathcal D'$ \> :: $\ifoc{\Gamma, \susp{A^+}}{\cdot}{\susp{A^+ \supset B^-}}$
  \` by weakening on $\mathcal D$\\
\>
$\mathcal E_1$ \> :: $\rfoc{\Gamma, \susp{A^+}}{A^+}$
  \` by rule ${\it id}^+$\\
\>
$\mathcal E_2$ \> :: $\lfoc{\Gamma, \susp{A^+}}{B^-}{\susp{B^-}}$
  \` by rule ${\it id}^-$\\
\>
$\mathcal E$ \> :: $\lfoc{\Gamma,  \susp{A^+}}{A^+ \supset B^-}{\susp{B^-}}$
  \` by rule $\supset_L$ on $\mathcal E_1$ and $\mathcal E_2$\\
\>
$\mathcal F$ \> :: $\ifoc{\Gamma, \susp{A^+}}{\cdot}{\susp{B^-}}$
  \` by focal substitution on $\mathcal D'$ and $\mathcal E$\\
\>
$\mathcal F_1$ \> :: $\ifoc{\Gamma, \susp{A^+}}{\cdot}{{B^-}}$
  \` by i.h.~(part 2) on $B^-$ and $\mathcal F$\\
\>
$\mathcal F_2$ \> :: $\ifoc{\Gamma}{A^+}{{B^-}}$
  \` by i.h.~(part 1) on $A^-$ and $\mathcal F_1$\\
\>
$\ifoc{\Gamma}{\cdot}{A^+ \supset B^-}$
  \` by rule $\supset_R$ on $\mathcal F_2$
\end{tabbing}
\end{description}
This suffices to show the line-by-line structure of the identity
expansion theorem; other cases follow the same pattern. We will now
give all the cases on the level of proof terms:

\bigskip
\fbox{$\eta^{A^+}(v.N) = N'$} (part 1) 
\begin{quote}
$\eta^{p^+}(z.N) = \etalt{z}{N}$\\
$\eta^{{\downarrow}A^-}(z.N) 
  = \dslt{x}{([\dsrt{(\eta^{A^-}(\lft{x}{\textsc{nil}}))}/z]N)}$\\
$\eta^\bot(z.N) = \mathsf{abort}$\\
$\eta^{A^+ \vee B^+}(z.N) 
  = [\eta^{A^+}(z_1.[\mathsf{inl}\,z_1/z]N), 
     \eta^{B^+}(z_2.[\mathsf{inr}\,z_2/z]N)]$\\
$\eta^{\top^+}(z.N) = \langle\rangle.([\langle\rangle^+/z]N)$\\
$\eta^{A^+ \wedge^+ B^+}(z.N) 
  = {\times}(\eta^{A^+}(z_1.
            (\eta^{B^+}(z_2. [\langle z_1, z_2 \rangle^+/z]N))))$
\end{quote}

\medskip
\clearpage\fbox{$\eta^{A^-}(N) = N'$} (part 2) 
\begin{quote}
$\eta^{p^-}(N) = \etart{N}$\\
$\eta^{{\uparrow}A^+}(N) = \usrt{[N](\uslt{(\eta^{A^+}(z.\rft{z}))})}$\\
$\eta^{A^+ \supset B^-}(N)
  = \lambda(\eta^{A^+}(z. (\eta^{B^-}([N](z; \textsc{nil})))))$\\
$\eta^\top(N) = \langle\rangle^-$\\
$\eta^{A^- \wedge^- B^-}(N) 
  = \langle \eta^{A^-}([N](\pi_1;\textsc{nil})), 
            \eta^{B^-}([N](\pi_2;\textsc{nil}))\rangle^-$
\end{quote}

\noindent
This completes the proof; the two parts of this theorem are named
{\tt expand+} and {\tt expand-} (respectively) 
in the accompanying Twelf development.
\end{proof}

\section{Focalization}
\label{sec:focalization}

Theorem~\ref{thm:completeness} in this section establishes the
focalization property: it is possible to turn the 
unfocused derivation of an unpolarized sequent into a focused derivation
for any polarized sequent that erases to the unpolarized one. This
proof naturally factors into two parts. The first part is a series of
{\it unfocused admissibility} lemmas, a family of admissible rules
which serve to show that focused sequent calculus derivations can
mimic unfocused derivations.  The second part is a straightforward
inductive proof that, if $U$ is stable,
$\seq{(\Gamma)^\circledast}{(U)^\circledast}$ implies
$\ifoc{\Gamma}{\cdot}{U}$. Recall that $(-)^\circledast$, from
Figure~\ref{fig:erasure-ctx}, is the erasure of polarization for
contexts and succedents.

\subsection{Unfocused admissibility}\label{sec:unfocusedadmissibility}

We think of unfocused admissibility as building an
abstraction layer on top of focused, polarized logic.  The proof of
focalization then interacts with focused derivations entirely through
the abstraction layer of unfocused admissibility. 

It is possible to motivate unfocused admissibility independently of
focalization.  Consider the unfocused right rules for
conjunction compared to the focused right rules for (positive)
conjunction.
\[
\infer[\wedge_R]
{\seq{\Gamma}{A \wedge B}}
{\seq{\Gamma}{A} & \seq{\Gamma}{B}}
\qquad
\infer[\wedge^+_R]
{\rfoc{\Gamma}{A^+ \wedge^+ B^+}}
{\rfoc{\Gamma}{A^+} & \rfoc{\Gamma}{B^+}}
\]
The rules look similar, but their usage is quite different. To prove
$A \wedge B$ we must prove $A$ (possibly doing some work on the left first)
and, in the other branch, we must prove $B$ (possibly doing some work on the
left first). To prove $A^+ \wedge^+ B^+$, we must 
decompose $A^+$ in one branch and $B^+$ in the other; there is no possibility
of doing work on the left first. The admissible rule in polarized
logic that actually matches the structure of the unfocused rule $\wedge_R$
looks like this:\footnote{The admissible rules associated with 
the lemmas in this section will all be
annotated with a $u$ for {\it unfocused} (e.g. $\wedge^+_{uR}$).}
\[
\infer-[\wedge^+_{uR}]
{\ifoc{\Gamma}{\cdot}{A^+ \wedge^+ B^+}}
{\ifoc{\Gamma}{\cdot}{A^+} & 
 \ifoc{\Gamma}{\cdot}{B^+}
}
\]
The stable premises $A^+$ and $B^+$ ensure that, in both
subderivations, it will be possible to do work on the left before
decomposing $A^+$ or $B^+$. 

The unfocused admissibility lemmas could be established the slow,
painful, and boring way, by one or more inductions over focused
derivations per lemma.
This more traditional approach is
both technically and philosophically unsatisfying, however. The approach is
technically unsatisfying because these theorems are long and annoying,
and it is philosophically unsatisfying because cut admissibility and
identity expansion are already supposed to capture global properties
of the logic. We will instead establish unfocused admissibility
directly from cut admissibility and identity expansion without the
need for any additional induction; each unfocused
admissibility proof is short, though dense. In Figure~\ref{fig:adm} we
present the proof of $\wedge^+_{uR}$ as a derivation built using
admissible rules.

\begin{figure}
\[
\infer-[{\it rsubst}]
{\ifoc{\Gamma}{\cdot}{A^+ \wedge^+ B^+}}
{\infer[{\uparrow}_R]
 {\ifoc{\Gamma}{\cdot}{{\uparrow}A^+}}
 {\deduce
  {\ifoc{\Gamma}{\cdot}{A^+}}
  {\mathcal D_1}}
 &
 \infer-[{\it lsubst}]
 {\ifoc{\Gamma, {\uparrow}A^+}{\cdot}{A^+ \wedge^+ B^+}}
 {\infer-[{\it weaken}]
  {\ifoc{\Gamma, {\uparrow}A^+}{\cdot}{B^+}}
  {\deduce
   {\ifoc{\Gamma}{\cdot}{B^+}}
   {\mathcal D_2}}
  & 
  \!\!\!\!\!
  \infer-[{\it expand}^+]
  {\ifoc{\Gamma, {\uparrow}A^+}{B^+}{A^+ \wedge^+ B^+}} 
  {\infer[{\it foc}_L]
   {\ifoc{\Gamma, {\uparrow}A^+, \susp{B^+}}{\cdot}
         {A^+ \wedge^+ B^+}}
   {\infer[{\uparrow}_L]
    {\ifoc{\Gamma, {\uparrow}A^+, \susp{B^+}}{[{\uparrow}A^+]}
          {A^+ \wedge^+ B^+}}
    {\infer-[{\it expand}^+]
     {\ifoc{\Gamma, {\uparrow}A^+, \susp{B^+}}{A^+}
           {A^+ \wedge^+ B^+}}
     {\infer[{\it foc}_R]
      {\ifoc{\Gamma'}
            {\cdot}
            {A^+ \wedge^+ B^+}}
      {\infer[\wedge^+_R]
       {\rfoc{\Gamma'}
             {A^+ \wedge^+ B^+}}
       {\infer[{\it id}^+]
        {\rfoc{\Gamma'}
              {A^+}}
        {}
        &
        \infer[{\it id}^+]
        {\rfoc{\Gamma'}
              {B^+}}
        {}}}}}}}}}
\]
\caption{Unfocused admissibility rule $\wedge^+_{uR}$ as a derivation,
  where $\Gamma' = \Gamma, {\uparrow}A^+, \susp{B^+}, \susp{A^+}$.}
\label{fig:adm}
\end{figure}


The unfocused admissibility lemmas will be presented in terms of the
admissible rules they justify, but their proofs will be presented
entirely at the level of proof terms. In most cases, we will omit the
propositions that annotate instances of cut admissibility.  We must be
careful about the interaction of cut admissibility and identity
expansion. The premises of ${\it expand}^+$ and ${\it expand}^-$ are
not suspension-normal because they contain non-atomic suspended
propositions, and cut admissibility is only defined on
suspension-normal sequents. All the unfocused admissibility lemmas in
this section require that the given sequents are stable and
suspension-normal, but we omit this repetitive precondition when we
write the admissible rules.\footnote{For lemmas that do not use cut
  admissibility, stability and suspension-normality are usually
  unnecessary preconditions. The mechanized proof states these less
  restrictive preconditions where they apply.}

\newcommand{\deshift}{\mbox{${\uparrow}{\downarrow}\hspace{-9.2pt}\diagup$}}

In certain cases we do more work than necessary, such as in the left
rule for ${\uparrow}{\top^+}$, which could alternatively be phrased as
a use of weakening. This is done to match the structure of unfocused
admissibility in substructural logics \cite{simmons12substructural}.

\subsubsection{Initial rules} 

A positive atomic proposition can appear in the hypothetical context
either as a shifted positive proposition $x{:}{\uparrow}p^+$ or as a
suspended positive proposition $z{:}\susp{p^+}$, and likewise for
negative atomic propositions on the right. As a result, we need four
initial rules to correspond to the single unfocused rule ${\it init}$.
(We could cut these four rules down to two if we restricted erasure
and focalization to suspension-free sequents instead of
suspension-normal sequents.)  Each of these unfocused admissibility
lemmas are actually directly derivable.

\[
\infer-[{\it initsusp}_u^-]
{\ifoct{\Gamma, x{:}p^-}{\cdot}{{\it initsusp}_u^-(x)}{\susp{p^-}}{}}
{}
\]

$\quad{\it initsusp}_u^-(x) = \lft{x}{\textsc{nil}}$

\[
\infer-[{\it init}_u^-]
{\ifoct{\Gamma, x{:}p^-}{\cdot}{{\it init}_u^-(x)}{{\downarrow}p^-}{}}
{}
\]

$\quad{\it init}_u^-(x) = \rft{(\dsrt{\etart{\lft{x}{\textsc{nil}}}})}$

\[
\infer-[{\it initsusp}_u^+]
{\ifoct{\Gamma, z{:}\susp{p^+}}{\cdot}{{\it initsusp}_u^+(z)}{p^+}}
{}
\]

$\quad{\it initsusp}_u^+(z) = \rft{z}$

\[
\infer-[{\it init}_u^+]
{\ifoct{\Gamma, x{:}{\uparrow}p^+}{\cdot}{{\it init}_u^+(x)}{p^+}}
{}
\]

$\quad{\it init}_u^+(x) = \lft{x}{\uslt{(\etalt{z}{\rft{z}})}}$

\subsubsection{Disjunction}

\[
\infer-[\bot_{uL}]
{\ifoct{\Gamma, x{:}{\uparrow}\bot}{\cdot}{{\bot}_{uL}(x)}{U}}
{}
\]

$\quad{\bot}_{uL}(x) = \lft{x}{(\uslt{\mathsf{abort}})}$

\[
\infer-[\vee_{uR1}]
{\ifoct{\Gamma}{\cdot}{\vee_{uR1}(N_1)}{A^+ \vee B^+}}
{\ifoct{\Gamma}{\cdot}{N_1}{A^+}}
\]

$\quad\vee_{uR1}(N_1) = 
   \llangle N_1 \rrangle^{A^+}
     (\eta^{A^+}(z.\rft{(\mathsf{inl}\,z)}))$

\[
\infer-[\vee_{uR2}]
{\ifoct{\Gamma}{\cdot}{\vee_{uR2}(N_2)}{A^+ \vee B^+}}
{\ifoct{\Gamma}{\cdot}{N_2}{B^+}}
\]

$\quad\vee_{uR2}(N_2) = 
   \llangle N_2 \rrangle^{B^+}
     (\eta^{B^+}(z.\rft{(\mathsf{inr}\,z)}))$

\[
\infer-[\vee_{uL}]
{\ifoct{\Gamma, x{:}{\uparrow}(A^+ \vee B^+)}
       {\cdot}{\vee_{uL}(x, x_1.N_1, x_2.N_2)}{U}}
{\ifoct{\Gamma, x_1{:}{\uparrow}A^+}{\cdot}{N_1}{U}
 &
 \ifoct{\Gamma, x_2{:}{\uparrow}B^+}{\cdot}{N_2}{U}}
\]

$\quad \vee_{uL}(x, x_1.N_1, x_2.N_2) = 
   \lft{x}{\uslt
     {(\llangle N_{\it Id} \rrangle 
        [\dslt{x_1}{N_1} ,
         \dslt{x_2}{N_2}])}}
  $

\smallskip
\quad 
where $N_{\it Id} =
  [ \eta^{A^+}(z_1. \rft{(\mathsf{inl}\,(\dsrt{\usrt{\rft{z_1}}}))}),
    \eta^{B^+}(z_2. \rft{(\mathsf{inr}\,(\dsrt{\usrt{\rft{z_1}}}))})
  ]$

\quad
is a closed term of type 
${\downarrow}({\uparrow}A^+) \vee {\downarrow}({\uparrow}B^+)$ 
introducing $A^+ \vee B^+$

\subsubsection{Positive conjunction}

\[
\infer-[\top^+_{uR}]
{\ifoct{\Gamma}{\cdot}{\top^+_{uR}}{\top^+}}
{}
\]

$\quad\top^+_{uR} = \rft{\langle\rangle^+}$

\[
\infer-[\top^+_{uL}]
{\ifoct{\Gamma, x{:}{\uparrow}\top^+}{\cdot}{\top^+_{uL}(x,N)}{U}}
{
 \ifoct{\Gamma}{\cdot}{N}{U}}
\]

$\quad\top^+_{uL}(x,N) = \lft{x}{\uslt{(\langle\rangle. N)}}$
   %

\[
\infer-[\wedge^+_{uR}]
{\ifoct{\Gamma}{\cdot}{\wedge^+_{uR}(N_1,N_2)}{A^+ \wedge^+ B^+}}
{\ifoct{\Gamma}{\cdot}{N_1}{A^+}
 &
 \ifoct{\Gamma}{\cdot}{N_2}{B^+}}
\]

$\quad\wedge^+_{uR}(N_1,N_2) =
 \llbracket \usrt{N_1} / x_1 \rrbracket^{{\uparrow}A^+}
  ( \llangle N_2 \rrangle^{B^+}
    N_{\it Id}(x_1))
$

\smallskip
\quad 
where $N_{\it Id}(x_1) = 
  \eta^{B^+}(z_2. \lft{x_1}
    {\uslt{(\eta^{A^+}(z_1. \rft{\langle z_1, z_2 \rangle^+}))}})$

\quad
    is a term of type $A^+ \wedge^+ B^+$ introducing $B^+$ with 
    $x$ of type ${\uparrow}A^+$ free.

\quad(This was the case given above as a derivation with admissible rules.)

\[
\infer-[\wedge^+_{uL}]
{\ifoct{\Gamma,x{:}{\uparrow}(A^+ \wedge^+ B^+)}{\cdot}
       {\wedge^+_{uL}(x,x_1.x_2.N_1)}{U}}
{
 \ifoct{\Gamma,x_1{:}{\uparrow}A^+,x_2{:}{\uparrow}B^+}{\cdot}
       {N_1}{U}}
\]

$\quad\wedge^+_{uL}(x,x_1.x_2.N_1) = 
   \lft{x}{\uslt{(\llangle N_{\it Id} \rrangle ({\times}\dslt{x_1}{\dslt{x_2}{N_1}}))}}$

\smallskip
\quad 
where $N_{\it Id} = 
  {\times} 
     (\eta^{A^+}
       (z_1. \eta^{B^+}
         (z_2.
           \rft{
             \langle
               \dsrt{\usrt{\rft{z_1}}},
               \dsrt{\usrt{\rft{z_2}}}
             \rangle^+}
         )
       )
     )$

\quad
is a closed term of type 
${\downarrow}{\uparrow}A^+ \wedge^+ {\downarrow}{\uparrow}B^+$
introducing $A^+ \wedge^+ B^+$.

\subsubsection{Implication}

\[
\infer-[\supset_{uR}]
{\ifoct{\Gamma}{\cdot}{\supset_{uR}(x_1.N_1)}{{\downarrow}(A^+ \supset B^-)}}
{\ifoct{\Gamma, x_1{:}{\uparrow}A^+}{\cdot}{N_1}{{\downarrow}B^-}}
\]

$\quad\supset_{uR}(x_1.N_1) =
  \rft{(\dsrt{(\llbracket \lambda \dslt{x_1}{\usrt{N_1}}/x \rrbracket N_{\it Id}(x))})}
  $

\smallskip
\quad
where $N_{\it Id}(x) = 
  \lambda
    (\eta^{A^+}
      (z.\eta^{B^-}
        \lft{x}{(\dsrt{\usrt{\rft{z}}}); 
                (\uslt{\dslt{x'}{\lft{x'}{\textsc{nil}}}})}
      )
    )$

\quad
is a term introducing $A^+ \supset B^-$ 
with $x$ of type 
${\downarrow}{\uparrow}A^+ \supset {\uparrow}{\downarrow}B^-$ free.

\[
\infer-[\supset_{uL}]
{\ifoct{\Gamma, x{:}A^+ \supset B^-}{\cdot}{\supset_{uL}(N_1,x_2.N_2)}{U}}
{
 \ifoct{\Gamma}{\cdot}{N_1}{A^+}
 &
 \ifoct{\Gamma, x_2{:}B^-}{\cdot}{N_2}{U}}
\]

$\quad\supset_{uL}(x,N_1,x_2.N_2) =
   \llangle 
     \llangle
     N_1
     \rrangle^{A^+} N_{\it Id}(x)     
   \rrangle^{{\downarrow} B^-}
   \dslt{x_2}{N_2}
$

\smallskip
\quad
where $N_{\it Id}(x) = 
     \eta^{A^+}
      (z. \rft{(\dsrt{(\eta^{B^-}
       (\lft{x}{(z; \textsc{nil})}
       ))})}
      )
     $

\quad is a term of type ${\downarrow}B^-$ introducing $A^+$ with
$x$ of type $A^+ \supset B^-$ free.


\paragraph{Negative conjunction}

\[
\infer-[\top^-_{uR}]
{\ifoct{\Gamma}{\cdot}{\top^-_{uR}}{{\downarrow}\top^-}}
{}
\]

$\quad\top^-_{uR}= \rft{(\dsrt{(\langle\rangle^-)})}$

\[
\infer-[\wedge^-_{uR}]
{\ifoct{\Gamma}{\cdot}{\wedge^-_{uR}(N_1,N_2)}{{\downarrow}(A^- \wedge^- B^-)}}
{\ifoct{\Gamma}{\cdot}{N_1}{{\downarrow}A^-}
 &
 \ifoct{\Gamma}{\cdot}{N_2}{{\downarrow}B^-}}
\]

$\quad\wedge^-_{uR}(N_1,N_2) = 
  \rft{(\dsrt{(
    \llbracket \langle \usrt{N_1}, \usrt{N_2} \rangle^-/x \rrbracket 
    N_{\it Id}(x)
  )})}
$

\smallskip
\quad
where $N_{\it Id}(x) = 
  \langle
    \eta^{A^-}(\lft{x}{\pi_1; \uslt{\dslt{y}{(\lft{y}{\textsc{nil}})}}}),
    \eta^{B^-}(\lft{x}{\pi_2; \uslt{\dslt{y}{(\lft{y}{\textsc{nil}})}}})
  \rangle^-$

\quad
is a term introducing $A^- \wedge^- B^-$ 
with $x$ of type 
${\uparrow}{\downarrow}A^- \wedge^- {\uparrow}{\downarrow}B^-$ free.

\[
\infer-[\wedge^-_{uL1}]
{\ifoct{\Gamma, x{:}A^- \wedge^- B^-}{\cdot}{\wedge^-_{uL1}(x,x_1.N_1)}{U}}
{\ifoct{\Gamma, x_1{:}A^-}{\cdot}{N_1}{U}}
\]

$\quad\wedge^-_{uL1}(x,x_1.N_1) = 
 \llbracket \eta^{A^-}(\lft{x}{\pi_1; \textsc{nil}}) / x_1 \rrbracket^{A^-} N_1$

\[
\infer-[\wedge^-_{uL2}]
{\ifoct{\Gamma, x{:}A^- \wedge^- B^-}{\cdot}{\wedge^-_{uL2}(x,x_2.N_2)}{U}}
{\ifoct{\Gamma, x_2{:}B^-}{\cdot}{N_2}{U}}
\]

$\quad\wedge^-_{uL2}(x,x_2.N_2) = 
 \llbracket \eta^{B^-}(\lft{x}{\pi_2; \textsc{nil}}) / x_2 \rrbracket^{B^-} N_2$

\subsubsection{Shift removal} 

In order for the unfocused admissibility lemmas to form a complete
abstraction boundary between the focused sequent calculus and the
focalization theorem, we must account for the fact that many polarized
propositions erase to the same proposition. For example, if
$(A^+)^\bullet = P_1$ and $(B^-)^\bullet = P_2$, then
\[P_1 \supset P_2 = ({\downarrow}(A^+ \supset B^-))^\bullet = ({\downarrow}{\uparrow}{\downarrow}(A^+ \supset B^-))^\bullet = ({\downarrow}{\uparrow}{\downarrow}{\uparrow}{\downarrow}(A^+ \supset B^-))^\bullet = \ldots\]
and so on. To deal with deeply-shifted propositions in the
completeness theorem, we will invoke a shift removal lemma. It is
different from the other unfocused admissibility lemmas in that it
mentions erasure and we prove it by induction over the structure of
propositions.

\begin{lemma}[Shift removal (positive)]

If $(A^+)^\bullet = P$, there exists a $B^+$, 
  not of the form ${\downarrow}{\uparrow}C^+$, such that
$(B^+)^\bullet = P$ and, for any $\Gamma$, $\ifoc{\Gamma}{\cdot}{B^+}$
implies $\ifoc{\Gamma}{\cdot}{A^+}$.
\end{lemma}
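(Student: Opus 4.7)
I plan to proceed by structural induction on the positive proposition $A^+$. In every case other than $A^+ = {\downarrow}A^-$, the given $A^+$ is already not of the form ${\downarrow}{\uparrow}C^+$, so I simply set $B^+ = A^+$ and take the required implication to be the identity. This disposes of the cases $A^+ = p^+$, $A^+ = \bot$, $A^+ = A_1^+ \vee A_2^+$, $A^+ = \top^+$, and $A^+ = A_1^+ \wedge^+ A_2^+$; in each, $(B^+)^\bullet = (A^+)^\bullet$ holds definitionally.

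The interesting case is $A^+ = {\downarrow}A^-$, where I further case-split on the shape of $A^-$. If $A^-$ is one of $p^-$, an implication, $\top^-$, or a negative conjunction, then $A^+ = {\downarrow}A^-$ is already not of the form ${\downarrow}{\uparrow}C^+$, and I again set $B^+ = A^+$. The only real work happens when $A^- = {\uparrow}C^+$, so that $A^+ = {\downarrow}{\uparrow}C^+$. Here the erasures collapse: $({\downarrow}{\uparrow}C^+)^\bullet = (C^+)^\bullet$. I apply the induction hypothesis to $C^+$ to obtain a $B^+$ with $(B^+)^\bullet = (C^+)^\bullet$, with $B^+$ not of the form ${\downarrow}{\uparrow}D^+$, and such that $\ifoc{\Gamma}{\cdot}{B^+}$ implies $\ifoc{\Gamma}{\cdot}{C^+}$ for every $\Gamma$. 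I reuse this same $B^+$ for the outer lemma.

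To close the case, given a derivation of $\ifoc{\Gamma}{\cdot}{B^+}$ I invoke the induction hypothesis to obtain $\ifoc{\Gamma}{\cdot}{C^+}$, then apply ${\uparrow}_R$ to get $\ifoc{\Gamma}{\cdot}{{\uparrow}C^+}$, followed by ${\downarrow}_R$ to get $\rfoc{\Gamma}{{\downarrow}{\uparrow}C^+}$, and finally ${\it foc}_R$ to arrive at $\ifoc{\Gamma}{\cdot}{{\downarrow}{\uparrow}C^+}$, as required. Erasure equality passes through because $({\downarrow}{\uparrow}C^+)^\bullet = (C^+)^\bullet = (B^+)^\bullet$. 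The only genuine subtlety — hardly an obstacle, but worth flagging — is that the lemma's side condition ``$B^+$ is not of the form ${\downarrow}{\uparrow}D^+$'' forces the induction to peel away arbitrarily deep nestings of ${\downarrow}{\uparrow}$ rather than stopping after one; this is exactly why the statement is phrased as a structural induction on propositions rather than a one-step rewrite.
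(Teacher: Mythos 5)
Your proposal is correct and matches the paper's proof: the paper also proceeds by structural induction on $A^+$, returning $B^+ = A^+$ whenever the outermost structure is not a double shift, and in the ${\downarrow}{\uparrow}C^+$ case invoking the induction hypothesis on $C^+$ and closing with the admissible rule ${\downarrow}{\uparrow}_{uR}$, whose proof term $\rft{(\dsrt{\usrt{N_1}})}$ is exactly your sequence ${\uparrow}_R$, ${\downarrow}_R$, ${\it foc}_R$.
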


\begin{lemma}[Shift removal (negative)] 

If $(A^-)^\bullet = P$, there exists a $B^-$, 
  not of the form ${\uparrow}{\downarrow}C^-$, such that
$(B^-)^\bullet = P$ and, for any $\Gamma$ and $\stable{U}$,
$\ifoc{\Gamma, B^-}{\cdot}{U}$ implies $\ifoc{\Gamma, A^-}{\cdot}{U}$
\end{lemma}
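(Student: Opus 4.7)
The plan is to prove each lemma by structural induction on the polarized proposition, independently; no mutual induction is needed because the ``not of the form'' restriction only constrains the outermost connective, not subformulas. In each case, I peel off at most a single top-level ${\downarrow}{\uparrow}$ (for the positive lemma) or ${\uparrow}{\downarrow}$ (for the negative lemma) pair at a time, appealing to the induction hypothesis on the strictly smaller stripped proposition.

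For the positive lemma, I would split on whether $A^+$ has the form ${\downarrow}{\uparrow}C^+$. If it does not, take $B^+ = A^+$; the implication $\ifoc{\Gamma}{\cdot}{B^+} \Rightarrow \ifoc{\Gamma}{\cdot}{A^+}$ is the identity and the ``not of the form'' condition is immediate. If $A^+ = {\downarrow}{\uparrow}C^+$, then the erasure equations give $(A^+)^\bullet = (C^+)^\bullet = P$, and $C^+$ is strictly smaller than $A^+$, so the induction hypothesis yields a $B^+$ not of the form ${\downarrow}{\uparrow}D^+$ with $(B^+)^\bullet = P$ together with a map from $\ifoc{\Gamma}{\cdot}{B^+}$ to $\ifoc{\Gamma}{\cdot}{C^+}$. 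To lift the implication, given $\ifoc{\Gamma}{\cdot}{B^+}$, I first apply the i.h.~to obtain $\ifoc{\Gamma}{\cdot}{C^+}$, and then apply the rules ${\uparrow}_R$, ${\downarrow}_R$, and ${\it foc}_R$ in sequence to produce a derivation of $\ifoc{\Gamma}{\cdot}{{\downarrow}{\uparrow}C^+}$, which is $\ifoc{\Gamma}{\cdot}{A^+}$.

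The negative lemma is symmetric. Take $B^- = A^-$ unless $A^- = {\uparrow}{\downarrow}C^-$, in which case apply the induction hypothesis to $C^-$ (noting $(A^-)^\bullet = (C^-)^\bullet$). Given a derivation of $\ifoc{\Gamma, B^-}{\cdot}{U}$ with $\stable{U}$, the i.h.~produces $\ifoc{\Gamma, C^-}{\cdot}{U}$. I weaken this to $\ifoc{\Gamma, {\uparrow}{\downarrow}C^-, C^-}{\cdot}{U}$, then apply ${\downarrow}_L$ to get $\ifoc{\Gamma, {\uparrow}{\downarrow}C^-}{{\downarrow}C^-}{U}$, ${\uparrow}_L$ to get $\lfoc{\Gamma, {\uparrow}{\downarrow}C^-}{{\uparrow}{\downarrow}C^-}{U}$, and finally ${\it foc}_L$ (using the preserved $\stable{U}$ hypothesis) to conclude $\ifoc{\Gamma, {\uparrow}{\downarrow}C^-}{\cdot}{U}$.

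I do not anticipate any real obstacle: the argument uses only ordinary focused rule applications plus admissible weakening, so neither cut admissibility nor identity expansion needs to be invoked for shift removal. The main point to get right is to avoid over-stripping by recursing into subformulas under connectives other than the outermost ${\downarrow}{\uparrow}$ or ${\uparrow}{\downarrow}$ pair, and to observe that the choice of $B^+$ or $B^-$ can be made independently of $\Gamma$ (it depends only on the shape of $A$), which is exactly what the statement requires.
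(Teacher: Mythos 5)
Your proposal is correct and matches the paper's proof: the paper also proceeds by structural induction on the proposition, returning the given derivation unchanged when the outermost structure is not a double shift, and otherwise invoking the induction hypothesis and re-applying the stripped shifts via the admissible rules ${\downarrow}{\uparrow}_{uR}$ (proof term $\rft{(\dsrt{\usrt{N_1}})}$) and ${\uparrow}{\downarrow}_{uL}$ (proof term $\lft{x}{\uslt{\dslt{x_1}{N_1}}}$), which are exactly the rule sequences ${\uparrow}_R$/${\downarrow}_R$/${\it foc}_R$ and ${\downarrow}_L$/${\uparrow}_L$/${\it foc}_L$ you spell out. Your observations that no cut or identity expansion is needed and that the choice of $B^\pm$ depends only on the shape of $A^\pm$ are both accurate.
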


\begin{proof}
Both lemmas are by induction on the structure of the proposition $A^+$
or $A^-$. If the outermost structure of the proposition is two
adjacent shifts, we invoke the induction hypothesis and apply either
${\downarrow}{\uparrow}_{uR}$ or
${\uparrow}{\downarrow}_{uL}$:

\[
\infer-[\mathit{{\downarrow}{\uparrow}_{uR}}]
{\ifoct{\Gamma}{\cdot}
       {\mathit{{\downarrow}{\uparrow}_{uR}}(N_1)}
       {{\downarrow}{\uparrow}A^+}}
{\ifoct{\Gamma}{\cdot}{N_1}{A^+}}
\]

$\quad\mathit{{\downarrow}{\uparrow}_{uR}}(N_1) =
 \rft{(\dsrt{\usrt{N_1}})}$

\[
\infer-[\mathit{{\uparrow}{\downarrow}_{uL}}]
{\ifoct{\Gamma, x{:}{\uparrow}{\downarrow}A^-}{\cdot}
       {\mathit{{\uparrow}{\downarrow}_{uL}}(x,x_1.N_1)}{U}}
{\ifoct{\Gamma, x_1{:}A^-}{\cdot}{N_1}{U}}
\]

$\quad\mathit{{\uparrow}{\downarrow}_{uL}}(x,x_1.N_1) =
  \lft{x}{\uslt{\dslt{x_1}{N_1}}}$

\bigskip
\noindent
In all cases where the outermost structure of the proposition is {\em
  not} made up of two adjacent shifts, we succeed immediately using
the given derivation.  These lemmas are called {\tt rshifty} and {\tt
  lshifty} in the accompanying Twelf development.
\end{proof}

\subsection{Proof of focalization}
\label{sec:focalizationproof}

Since we have not defined proof terms corresponding to 
unfocused sequent calculus derivations, in
the proof of the focalization we will return to the more
traditional style of proof presentation.

\begin{theorem}[Focalization]\label{thm:completeness}
If $U$ is stable and $\Gamma$ and $U$ are suspension-normal, then
$\seq{(\Gamma)^\circledast}{(U)^\circledast}$ implies $\ifoc{\Gamma}{\cdot}{U}$.
\end{theorem}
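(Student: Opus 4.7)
The plan is to proceed by structural induction on the given unfocused derivation $\mathcal D :: \seq{(\Gamma)^\circledast}{(U)^\circledast}$, using the unfocused admissibility lemmas of Section~\ref{sec:unfocusedadmissibility} as the only tools that touch the focused calculus. Because $U$ is stable and suspension-normal, $U$ is either a positive proposition $A^+$ or a suspended atomic negative $\susp{p^-}$; in either case, the shape of $(U)^\circledast$ is determined by the outermost connective of the polarized succedent (once shifts are stripped). Similarly, every hypothesis in $(\Gamma)^\circledast$ arises either from some $A^-$ or from a suspended atom $\susp{p^+}$ in $\Gamma$, and its outermost unpolarized connective is determined by the outermost polarized connective (modulo shifts). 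This structural correspondence is exactly what lets the case analysis go through.

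For each case of $\mathcal D$, I would first locate the polarized proposition in $\Gamma$ or $U$ whose erasure is the principal formula of the last rule, and apply the appropriate shift-removal lemma so that its outermost structure is not a pair of adjacent shifts. After this reduction, the outermost polarized connective is forced by erasure (e.g.\ if the last rule is $\wedge_R$ acting on $P_1 \wedge P_2$ then $U$ is either $A^+ \wedge^+ B^+$ or $A^- \wedge^- B^-$, etc.), and I apply the corresponding unfocused admissibility rule from Section~\ref{sec:unfocusedadmissibility} (e.g.\ $\wedge^+_{uR}$ or $\wedge^-_{uR}$). The premises of that admissible rule are again polarized sequents with stable, suspension-normal $\Gamma$ and $U$, whose erasures are precisely the premises of the rule applied in $\mathcal D$; so the induction hypothesis applies to each of them and furnishes the required focused derivations. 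For $\mathit{init}$ I invoke one of the four variants $\mathit{init}_u^+$, $\mathit{initsusp}_u^+$, $\mathit{init}_u^-$, $\mathit{initsusp}_u^-$, choosing the one that matches whether the atom occurs shifted or suspended in $\Gamma$ and $U$. For $\bot_L$ I use $\bot_{uL}$ after shift-removal on the relevant negative hypothesis.

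The main obstacle will be the bookkeeping around erasure's non-injectivity: a single unpolarized rule corresponds to several possible polarized structures (a positive and a negative version for $\top$ and $\wedge$, plus arbitrarily many shift wrappers on every subformula). The shift-removal lemmas are designed exactly to collapse the shift ambiguity to a single level, after which only the two genuine polarity choices remain, giving a finite case analysis per unfocused rule. A related subtlety is that left rules in the unfocused calculus implicitly contract the principal proposition, so when I invoke a left unfocused-admissibility rule I must ensure that the polarized hypothesis is still available in $\Gamma$ for the recursive call; this is immediate because the admissible rules retain the decomposed hypothesis in their premises, matching the structure of the $G_3$-style rules of Figure~\ref{fig:unfoc}.

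Finally, the theorem follows because in every case the admissible rule we invoke either concludes with a derivation of $\ifoc{\Gamma}{\cdot}{U}$ directly or reduces the problem to strictly smaller unfocused subderivations on stable, suspension-normal polarized sequents, making the induction well-founded. The whole argument is linear in the number of rules of the unfocused calculus, with one admissibility-rule invocation per case, exactly because the hard content (cut, identity expansion, and the permutation-like reasoning they subsume) has been packaged into the unfocused admissibility lemmas.
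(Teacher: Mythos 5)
Your proposal matches the paper's proof essentially exactly: induction on the unfocused derivation, case analysis on the polarized proposition erasing to the principal formula, shift removal to eliminate the double-shift ambiguity, and one unfocused-admissibility lemma invocation per case with the induction hypothesis applied to the premises. The only detail worth flagging is that for $\wedge_{L1}$/$\wedge_{L2}$ against a positive conjunction the premise erasures do not match \emph{precisely} --- the paper inserts a weakening step to supply the missing ${\uparrow}B_2^+$ hypothesis before invoking $\wedge^+_{uL}$ --- but this is exactly the kind of bookkeeping your plan anticipates.
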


\noindent
The second condition, that $\Gamma$ and $U$ are suspension-normal, is
not something we strictly need to state, as $(\Gamma)^\circledast$ and
$(U)^\circledast$ are only defined on suspension-normal contexts and
succedents.

\begin{proof}
By induction on the structure of the given derivation $\mathcal
D$. Aside from the rule ${\it init}$, each rule in
Figure~\ref{fig:unfoc} decomposes one proposition $P$ on the left or
the right. By the definition of erasure in
Figure~\ref{fig:erasure-ctx}, if $P$ is being decomposed on the right
then $P = (A^+)^\bullet$ for some $A^+$. We proceed by 
case analysis on the structure of the polarized proposition $A^+$.
By
the shift removal lemma, it suffices to consider the case where this
formula is not double-shifted. We will show a few representative
cases.

\begin{description}
\item[Case] $A^+ = \top^+$, \quad
$\mathcal D = \infer[\top_R]
{\seq{(\Gamma)^\circledast}{\top}}{}$
\begin{tabbing}
\qquad \= $\mathcal F_2'$ \= \kill
\>
$\ifoc{\Gamma}{\cdot}{\top^+}$ 
  \` by unfocused admissibility lemma $\top^+_{uR}$
\end{tabbing}

\item[Case] $A^+ = {\downarrow}\top^-$, \quad
$\mathcal D = \infer[\top_R]
{\seq{(\Gamma)^\circledast}{\top}}{}$
\begin{tabbing}
\qquad \= $\mathcal F_2'$ \= \kill
\>
$\ifoc{\Gamma}{\cdot}{{\downarrow}\top^-}$ 
  \` by unfocused admissibility lemma $\top^-_{uR}$
\end{tabbing}

\item[Case] $A^+ = B_1^+ \wedge^+ B_2^+$, \quad
$\mathcal D = \infer[\wedge_R]
{\seq{(\Gamma)^\circledast}{(B_1^+)^\bullet \wedge (B_2^+)^\bullet}}
{\deduce{\seq{(\Gamma)^\circledast}{(B_1^+)^\bullet}}{\mathcal D_1}
 &
 \deduce{\seq{(\Gamma)^\circledast}{(B_2^+)^\bullet}}{\mathcal D_2}}$
\begin{tabbing}
\qquad \= $\mathcal F_2'$ \= \kill
\>
$\mathcal E_1$ \> :: $\ifoc{\Gamma}{\cdot}{B_1^+}$
  \` by i.h.~on $\mathcal D_1$\\
\>
$\mathcal E_2$ \> :: $\ifoc{\Gamma}{\cdot}{B_2^+}$
  \` by i.h.~on $\mathcal D_2$\\
\>
$\ifoc{\Gamma}{\cdot}{B_1^+ \wedge^+ B_2^+}$ 
  \` by unfocused admissibility lemma $\wedge^+_{uR}$ 
     on $\mathcal E_1$ and $\mathcal E_2$
\end{tabbing}

\item[Case] $A^+ = {\downarrow}(B_1^- \wedge^- B_2^-)$, \quad
$\mathcal D = \infer[\wedge_R]
{\seq{(\Gamma)^\circledast}{(B_1^-)^\bullet \wedge (B_2^-)^\bullet}}
{\deduce{\seq{(\Gamma)^\circledast}{(B_1^-)^\bullet}}{\mathcal D_1}
 &
 \deduce{\seq{(\Gamma)^\circledast}{(B_2^-)^\bullet}}{\mathcal D_2}}$
\begin{tabbing}
\qquad \= $\mathcal F_2'$ \= \kill
\>
$\mathcal E_1$ \> :: $\ifoc{\Gamma}{\cdot}{{\downarrow}B_1^-}$
  \` by i.h.~on $\mathcal D_1$\\
\>
$\mathcal E_2$ \> :: $\ifoc{\Gamma}{\cdot}{{\downarrow}B_2^-}$
  \` by i.h.~on $\mathcal D_2$\\
\>
$\ifoc{\Gamma}{\cdot}{{\downarrow}(B_1^- \wedge^- B_2^-)}$ 
  \` by unfocused admissibility lemma $\wedge^-_{uR}$ 
     on $\mathcal E_1$ and $\mathcal E_2$
\end{tabbing}
\end{description}
There are three other cases
corresponding to $\vee_{R1}$, $\vee_{R2}$, and $\supset_R$.
All proceed in a similar fashion. 

Similarly, if $P$ is being decomposed
on the left, then $P = (A^-)^\bullet$ for some $A^-$, and we proceed
using the shift removal lemma and case analysis on the structure
of $A^-$.

\begin{description}
\item[Case] $A^- = {\uparrow}(B_1^+ \wedge^+ B_2^+)$, \quad
$\mathcal D = \infer[\wedge_{L1}]
{\seq{(\Gamma)^\circledast, (B_1^+)^\bullet \wedge (B_2^+)^\bullet}{(U)^\circledast}}
{\deduce{\seq{(\Gamma)^\circledast, (B_1^+)^\bullet \wedge (B_2^+)^\bullet, (B_1^+)^\bullet}{(U)^\circledast}}{\mathcal D_1}}$

\begin{tabbing}
\qquad \= $\mathcal F_2'$ \= \kill
\>
$\mathcal E_1$ \> ::
 $\ifoc{\Gamma, {\uparrow}(B_1^+ \wedge^+ B_2^+), {\uparrow}B_1^+}{\cdot}{U}$
  \` by i.h.~on $\mathcal D_1$\\
\>
$\mathcal E_1'$ \> ::
 $\ifoc{\Gamma, {\uparrow}(B_1^+ \wedge^+ B_2^+), {\uparrow}B_1^+, {\uparrow}B_2^+}{\cdot}{U}$
  \` by weakening on $\mathcal E_1$\\
\>
$\mathcal E$ \> ::
 $\ifoc{\Gamma, {\uparrow}(B_1^+ \wedge^+ B_2^+), {\uparrow}(B_1^+ \wedge^+ B_2^+)}{\cdot}{U}$\\
  \` by unfocused admissibility lemma $\wedge^+_{uL}$ on $\mathcal E_1'$\\
\>
$\ifoc{\Gamma, {\uparrow}(B_1^+ \wedge^+ B_2^+)}{\cdot}{U}$ \` by contraction on $\mathcal E$
\end{tabbing}

\item[Case] $A^- = B_1^- \wedge^- B_2^-$, \quad
$\mathcal D = \infer[\wedge_{L1}]
{\seq{(\Gamma)^\circledast, (B_1^-)^\bullet \wedge (B_2^-)^\bullet}{(U)^\circledast}}
{\deduce{\seq{(\Gamma)^\circledast, (B_1^-)^\bullet \wedge (B_2^-)^\bullet, (B_1^-)^\bullet}{(U)^\circledast}}{\mathcal D_1}}$

\begin{tabbing}
\qquad \= $\mathcal F_2'$ \= \kill
\>
$\mathcal E_1$ \> ::
 $\ifoc{\Gamma, B_1^- \wedge^- B_2^-, B_1^-}{\cdot}{U}$
  \` by i.h.~on $\mathcal D_1$\\
\> $\mathcal E$ \> ::
$\ifoc{\Gamma, B_1^- \wedge^- B_2^-, B_1^- \wedge^- B_2^-}{\cdot}{U}$\\
  \` by unfocused admissibility lemma $\wedge^-_{uL1}$ on $\mathcal E_1$ \\
\> $\ifoc{\Gamma, B_1^- \wedge^- B_2^-}{\cdot}{U}$
  \` by contraction on $\mathcal E$
\end{tabbing}

\end{description}
There are five other non-initial cases: two corresponding to
$\wedge_{L2}$ that mirror the cases for $\wedge_{L1}$, and three
corresponding to $\bot_L$, $\vee_L$,
and $\supset_L$. All proceed in a similar fashion. 

If our unfocused derivation ends with the ${\it init}$ rule, we must
observe that, according to the definition of erasure, there are four
distinct sequents that all erase to $\seq{\Gamma, p}{p}$.  An atomic
proposition can be the erasure of a positive or negative atomic
proposition, and in suspension-normal sequents atomic
propositions may be suspended or not.  (If $p^+$ is a positive atomic
proposition, then it can appear on the left as either $\susp{p^+}$ or
as ${\uparrow}{\downarrow}\ldots{\uparrow}{\downarrow}{\uparrow}p^+$.)
In each of the four cases, the theorem proceeds directly from shift
removal and the appropriate unfocused admissibility lemma.

This
theorem is named {\tt complete} in the accompanying Twelf development.
\end{proof}

\subsection{Corollaries of focalization}\label{ref:corollaries}

Consider this section a short victory lap. We have established cut
admissibility and identity for the focused sequent calculus, as well
as the focalization and de-focalization properties, without reference
to any properties of the unfocused sequent calculus other than
weakening and exchange. Given these four theorems, the standard
metatheoretic results of the unfocused sequent calculus can be
established as straightforward corollaries. (If our goal was simply to
prove cut admissibility and identity for the unfocused sequent
calculus, then proving focused cut admissibility, identity expansion,
de-focalization, and focalization would admittedly not be the easiest
way to do so!)

The only new thing we need is an arbitrary polarization strategy
$(P)^\circ$ that translates unpolarized propositions to negatively
polarized propositions. It is straightforward to then define
$(\Gamma)^\circ$, the obvious lifting of this function to
contexts.

\begin{corollary}\label{cor:cut}
If $\seq{\Gamma}{P}$ and $\seq{\Gamma, P}{Q}$, then $\seq{\Gamma}{Q}$.
\end{corollary}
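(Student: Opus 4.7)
The plan is to reduce unfocused cut to focused cut by a focalization/de-focalization sandwich. Fix any polarization strategy $(-)^\circ$, and let $A^- = (P)^\circ$, $B^- = (Q)^\circ$, so that $(A^-)^\bullet = P$, $(B^-)^\bullet = Q$, and $((\Gamma)^\circ)^\circledast = \Gamma$. Introduce the downshifted succedents $\hat A^+ = {\downarrow} A^-$ and $\hat B^+ = {\downarrow} B^-$; these are positive propositions, hence stable succedents, whose erasures remain $P$ and $Q$. Everything in sight is suspension-free and therefore suspension-normal, so the side conditions of focalization and cut admissibility are satisfied automatically.

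First, I would apply Theorem~\ref{thm:completeness} twice to obtain focused derivations $\mathcal D_1 :: \ifoc{(\Gamma)^\circ}{\cdot}{\hat A^+}$ from $\seq{\Gamma}{P}$ and $\mathcal D_2 :: \ifoc{(\Gamma)^\circ, A^-}{\cdot}{\hat B^+}$ from $\seq{\Gamma, P}{Q}$. Next, apply the rule ${\downarrow}_L$ to $\mathcal D_2$ to pull $A^-$ back out of the hypothetical context as $\hat A^+$, yielding $\ifoc{(\Gamma)^\circ}{\hat A^+}{\hat B^+}$. Combining $\mathcal D_1$ with this derivation via part~4 of Theorem~\ref{thm:cut} (noting $\hat B^+$ is stable because it is positive) gives $\ifoc{(\Gamma)^\circ}{\cdot}{\hat B^+}$. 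Finally, de-focalize (Theorem~\ref{thm:soundness}) to get $\seq{\Gamma; \cdot}{Q}$, and invert the final ${\it nil}$ rule --- the only rule that can end a derivation with empty second context --- to extract $\seq{\Gamma}{Q}$.

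There is no real obstacle beyond a small design choice: using a downshift to cast $(P)^\circ$ as a positive succedent $\hat A^+$ means it is stable for free (as focalization demands) while preserving its erasure, and it routes the cut through part~4 of Theorem~\ref{thm:cut} rather than requiring us to invert through ${\it foc}_R$ and ${\downarrow}_R$ in $\mathcal D_1$ to expose $A^-$ for a part~3 cut. A symmetric proof using negative cut would work equally well but would need that extra inversion step.
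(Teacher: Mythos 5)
Your proposal is correct and follows essentially the same route as the paper's proof: downshift the polarized succedents so that focalization applies, use ${\downarrow}_L$ on the second derivation to place ${\downarrow}(P)^\circ$ in the inversion context, cut via part~4 of Theorem~\ref{thm:cut}, and de-focalize. The only (harmless) extra detail is your explicit inversion of the $\mathit{nil}$ rule at the end, which the paper elides.
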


\begin{proof}
Since $(-)^\circ$ is defined to be a partial inverse of erasure,
the first given derivation is equally a derivation of
$\seq{((\Gamma)^\circ)^\circledast}{({\downarrow}(P)^\circ)^\circledast}$, and
the second given derivation is equally a derivation of
$\seq{((\Gamma)^\circ,(P)^\circ)^\circledast}{({\downarrow}(Q)^\circ)^\circledast}$. 

By focalization (Theorem~\ref{thm:completeness}), we have the focused
derivations $\ifoc{(\Gamma)^\circ}{\cdot}{{\downarrow}(P)^\circ}$ and
$\ifoc{(\Gamma)^\circ, (P)^\circ}{\cdot}{{\downarrow}(Q)^\circ}$, and
by applying ${\downarrow}_L$ to the second of these derivations we get
$\ifoc{(\Gamma)^\circ}{{\downarrow}(P)^\circ}{{\downarrow}(Q)^\circ}$.
Then, by cut admissibility (Theorem~\ref{thm:cut}, part 4), we obtain
a derivation of $\ifoc{(\Gamma)^\circ}{\cdot}{{\downarrow}(Q)^\circ}$, which by
de-focalization (Theorem~\ref{thm:soundness}) gives us a derivation of
$\seq{((\Gamma)^\circ)^\circledast}{({\downarrow}(Q)^\circ)^\circledast}$, 
which is the same thing as a derivation of $\seq{\Gamma}{Q}$.
\end{proof}

\clearpage
\begin{corollary}\label{cor:identity}
For all $P$, $\seq{\Gamma, P}{P}$.
\end{corollary}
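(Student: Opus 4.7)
The plan is to mirror the structure of Corollary~\ref{cor:cut}, using the polarization strategy $(-)^\circ$ to translate the unpolarized identity goal into the focused world, solving it there with the identity expansion theorem, and then translating back with de-focalization. The point is that Theorem~\ref{thm:identity} already gives us a focused identity principle for arbitrary negative propositions (displayed right after the statement of that theorem), so the corollary is essentially just a change of setting.

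First, I would polarize: starting from $\Gamma$ and $P$, I apply the polarization strategy to obtain a negative proposition $(P)^\circ$ and a suspension-normal context $(\Gamma)^\circ, (P)^\circ$ such that $((\Gamma)^\circ, (P)^\circ)^\circledast = \Gamma, P$ and $((P)^\circ)^\circledast = P$. Next, I would invoke the derivation given after Theorem~\ref{thm:identity} for the negative identity principle: starting from $\mathit{id}^-$ to get $\lfoc{(\Gamma)^\circ, (P)^\circ}{(P)^\circ}{\susp{(P)^\circ}}$, applying $\mathit{foc}_L$, and then applying $\mathit{expand}^-$, which produces $\ifoc{(\Gamma)^\circ, (P)^\circ}{\cdot}{(P)^\circ}$. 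Finally, I would apply de-focalization (Theorem~\ref{thm:soundness}) to this derivation to obtain $\seq{((\Gamma)^\circ, (P)^\circ)^\circledast}{((P)^\circ)^\circledast}$, which unfolds by the erasure equations to $\seq{\Gamma, P}{P}$.

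There is really no main obstacle here: every piece has already been done, and the proof is simply a composition of theorems. The only thing worth flagging is that we genuinely rely on identity expansion and not on the primitive $\mathit{id}^-$ rule, since $\mathit{id}^-$ only applies to suspended propositions while $(P)^\circ$ is in general a fully-structured negative proposition. The polarization-and-erasure bookkeeping is identical to what was already done in Corollary~\ref{cor:cut}, so nothing new is required of the polarization strategy.
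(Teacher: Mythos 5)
Your proposal is correct and follows essentially the same route as the paper: the paper's proof likewise obtains $\ifoc{(\Gamma)^\circ, (P)^\circ}{\cdot}{(P)^\circ}$ via the negative identity principle (the $\mathit{id}^-$/$\mathit{foc}_L$/$\mathit{expand}^-$ derivation displayed after Theorem~\ref{thm:identity}) and then applies de-focalization and erasure. Your explicit remark that the primitive $\mathit{id}^-$ alone does not suffice is exactly the right point of emphasis.
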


\begin{proof}
By the identity principle, which as discussed is a corollary of
identity expansion (Theorem~\ref{thm:identity}), we can obtain a
derivation of $\ifoc{(\Gamma)^\circ, (P)^\circ}{\cdot}{(P)^\circ}$. By
de-focalization (Theorem~\ref{thm:soundness}), this gives us a
derivation of $\seq{((\Gamma)^\circ,
  (P)^\circ)^\circledast}{((P)^\circ)^\circledast}$, which is the same thing
as a derivation of $\seq{\Gamma,P}{P}$.
\end{proof}

These corollaries ({\tt unfocused-cut} and {\tt unfocused-identity} 
in the accompanying Twelf development) 
are interesting primarily insofar as they establish the 
total dominance that the focused sequent calculus enjoys over 
the unfocused sequent calculus. We have performed precisely
one induction over unpolarized propositions
(implicitly, in the definition of 
$(-)^\circ$) and one induction over unfocused derivations (in the proof of
focalization, Theorem~\ref{thm:completeness}). 
The cut admissibility and identity
expansion lemmas for the focused sequent calculus are strong enough for the
unfocused sequent calculus to inherit its metatheory from the force
of the theorems in the focused setting.

\section{Conclusion}
\label{sec:previous}

We have presented two sequent calculi for different variants of
propositional intuitionistic logic, an unfocused sequent calculus for
unpolarized intuitionistic logic and a focused sequent calculus for
polarized intuitionistic logic.  We then proved a strong theorem about
their equivalence at the level of derivability. The equivalence result
follows from mechanized, structurally inductive proofs establishing
internal soundness and completeness for the focused logic. That
equivalence result implies the internal soundness and completeness of
the unfocused logic. Our systematic approach avoids
tedious invertibility lemmas and allows for a proof, on paper or 
in a mechanized setting, that scales linearly in the number of
connectives and rules.

We will close with a brief survey of existing techniques used to prove
the focalization property, with an emphasis on intuitionistic logic.

\subsection{Comparison to existing focalization proofs}

The most prevalent technique by far has been to do things the long
way.  Andreoli's original presentation of a focused sequent calculus
required a large and tedious series of invertibility lemmas; Andreoli
described these lemmas as ``long but not difficult''
\cite{andreoli92logic}.  Howe's dissertation presents a similar
brute-force approach to the focalization property in the context of
intuitionistic logics, including intuitionistic linear logic
\cite{howe98proof}.  In an unpublished note, Laurent described a
refactored version of the focalization property for classical linear
logic. Laurent staged the proof differently from Andreoli, introducing
several intermediate refinements with some, but not all, of the
restrictions of full focusing.  Laurent's proof is conceptually
clearer than Andreoli's, but it still requires tedious invertibility
lemmas in order to establish the identity property
\cite{laurent04proof}.

The ``grand tour'' strategy of Liang and Miller stands somewhat alone
as an attempt to piggyback on established focusing results, rather
than proving new ones.  Unfocused derivations are translated into
classical linear logic derivations, which are then focused. It is then
only necessary to show that focused derivations can be translated back
out from the focused classical linear logic derivations
\cite{liang09focusing}.  We believe most of instances of this strategy
can be understood, in the context of our system, as specific
polarization strategies, which (as partial inverses of erasure) are
handled generically by our erasure-based proof of focalization.
 


The idea that focalization should arise as a consequence of the 
cut admissibility
and identity properties for a focused logic
originates from Chaudhuri's dissertation
\cite{chaudhuri06focused}. Compared to this work, Chaudhuri's reliance on 
the identity property is less direct, and his proof of identity was 
non-structural, relying on a global decomposition of contexts and
propositions. Chaudhuri's technique was generalized by 
Liang and Miller \shortcite{liang11focused} to any systems meeting 
a general set of criteria; 
these criteria encompass classical and linear logics. In comparison,
the techniques in this paper have not
yet been applied to classical logics,
but have been shown to extend straightforwardly to substructural 
and modal logics \cite{simmons12substructural}.

A line of work by Reed proved focalization by adding extra structure
to the logic being focused. Reed's ``token passing translation''
obtains the necessary structure through the use of linearity and a
distinguished linear atomic proposition \cite{reed08focalizing}. His
work with Pfenning, which was aimed at giving a resource semantics for
substructural logics, obtains the necessary structure through the use
of first-order terms quotiented by an equivalence relation
\cite{reed10focus}. These proofs avoid invertibility lemmas, but their
technique is less direct than ours and may not be as amenable to
formalization in existing logical frameworks.

A wildly different approach to focalization can be found in the
context of Zeilberger's {\it higher-order focusing}
\cite{zeilberger08focusing}. This pattern-based presentation of logic
entirely removes any mention of individual logical connectives from
the core logic; negative and positive propositions are handled in a
completely generic way, in line with synthetic presentations of
focusing. This approach prevents tedious repetition by
default; there aren't enough rules left to tediously induct upon!
Polarization strategy-based focalization for higher-order focusing has
been formalized in the Agda proof assistant, and there do not appear
to be any technical obstacles to mechanizing the erasure-based
approach discussed by Zeilberger \shortcite{zeilberger08unity}.
Higher-order focused proofs represent a significant departure from the
style of presentation in this paper; in particular, higher-order proof
terms are infinitary, which means the Agda mechanization cannot be
ported straightforwardly in Twelf. It is unclear what impact
Zeilberger's strategy of de-functionalizing focused derivations (which
makes them representable in Twelf and, more generally, by
non-infinitary derivations) has on focalization
\cite{zeilberger09defunctionalizing}.

The broad outlines of this paper were first developed in conjunction
with our study of ordered linear logic as a forward chaining logic
programming language \cite{pfenning09substructural}. For the purposes
of that paper, unfocused admissibility in a {\it weakly focused}
sequent calculus -- which did not force invertible rules to be applied
eagerly -- was established the historic (long and tedious) way. A
Twelf proof for weakly focused intuitionistic logic developed at the
same time was the genesis of the structural identity expansion proof
presented here \cite{simmons09weak}.  Eventually, this Twelf proof was
adapted back to ordered linear logic in a technical report that also
introduced the idea of suspended propositions
\cite{simmons11weak}. Unfortunately, to prove full focalization it was
still necessary to prove tedious invertibility lemmas
\cite{simmons11logical}, meaning that the weak focusing technique
gives no advantages beyond those provided by Laurent's refactoring. We
believe this article supersedes our work on weak focusing
entirely. 

Our novel presentation of identity expansion seems to be necessary to
deal with positive propositions. In logics without any interesting
positive structure, simpler techniques have been successfully applied
to prove analogues of the focalization property. The first result in
this line was Miller et al.'s work on {\it uniform proofs} which, like
Andreoli's seminal work, was motivated by logic programming
\cite{miller91uniform}.  We don't intend to fully survey techniques
applicable to settings with only negative connectives, but we will
mention two such systems.  The first system is Jagadeesan et al.'s
$\lambda RCC$, a mixed-paradigm logic programming language with {\it
  atoms} and {\it constraints} that, in retrospect, 
are recognizable as instances of 
negative and positive atoms.
Their focalization proof roughly
resembles the one used by Miller et
al.~\cite{jagadeesan05testing}. The second system is the framework in
which Reed and Pfenning developed their constructive resource
semantics. This system is notable for our purposes because its
focalization proof almost exactly follows our development
\cite{reed10focus}. It was not known at the time how to extend their
proof to a language with non-trivial positive propositions.

\subsection*{Acknowledgments}

Carlo Angiuli, Taus Brock-Nannestad, Illiano Cervesato, 
Kaustuv Chaudhuri, Karl Crary,
Rowan Davies,
Robert Harper, Dan Licata, Chris Martens, Adam Megacz, Dale Miller,
Frank Pfenning, Jason Reed, Fabien Renaud, Bernardo Toninho, 
Sean McLaughlin, Noam Zeilberger, and
two anonymous reviewers offered helpful pointers to existing work
and/or feedback on various drafts of this work.  Frank Pfenning's
insights, particularly his Twelf formulation of identity expansion for
a weakly focused logic (which preceded the formal on-paper formulation
by several years), were particularly invaluable.

Support for this research was provided by the Funda\c{c}\~ao para a
Ci\^encia e a Tecnologia (Portuguese Foundation for Science and
Technology) through the Carnegie Mellon Portugal Program under Grant
NGN-44 and by an X10 Innovation Award from IBM.

\bibliographystyle{acmtrans}
\bibliography{ref}

\end{document}
